\newenvironment{proof}{{\bf Proof.}}{\hfill\mbox{$\Box$} \\ }
\newtheorem{prop}{Proposition}[section]
\newcommand{\bE}{{\Bbb E}}
\newcommand{\bQ}{{\Bbb Q}}
\newcommand{\bN}{{\Bbb N}}
\newcommand{\bR}{{\Bbb R}}
\def\1{{\bf 1}}
\def\epsilon{\varepsilon}
\DeclareMathOperator{\arcsinh}{arcsinh}
\title{Capturing the power options smile by an additive two-factor
model for overlapping futures prices}
\author{Marco Piccirilli}
\affil[1]{Dublin City University, School of Mathematical Sciences, Glasnevin, Dublin 9, Ireland}
\author{Maren Diane Schmeck} 
\affil[2]{Center for Mathematical Economics, University of Bielefeld, Universit\"atssta\ss e 25, D-33615 Bielefeld, Germany} 
\author{Tiziano Vargiolu}
\affil[3]{University of Padova, Department of Mathematics, via Trieste 63, Torre Archimede, I-35121 Padova, Italy}
\begin{document}

\maketitle

\begin{abstract} 
In this paper we introduce an additive two-factor model for  electricity futures prices based on Normal Inverse Gaussian L\'evy processes, that fulfills a no-overlapping-arbitrage (NOA) condition. We compute European option prices by Fourier transform methods, introduce a specific calibration procedure that takes into account no-arbitrage constraints and fit the model to  power option settlement prices of the European Energy Exchange (EEX). We show that our model is able to reproduce the different levels and shapes of the implied volatility (IV) profiles displayed by options with a variety of delivery periods.
\end{abstract}

\noindent {\bf Keywords:}  Volatility Smile, Overlapping Delivery Periods, Arbitrage,  Additive Models, Power Options, FFT 

\noindent{\bf JEL Classification:}  C13, C14, C32, Q40, G13.

\section{Introduction}

One of the big challenges of this age is to develop efficient storage possibilities for electricity. As the available storage possibilities are very limited in efficiency and capacity, a major breakthrough of technology is necessary to make the storability of electricity comparable to those of  commodities as corn or oil.  Thus, at this point of time one can say (simplifying) that  electricity is  not storable, at least not in the sense of classical commodities. This has important consequences. One of them is that electricity futures contracts deliver the underlying not at a point of time, but rather over a specified delivery period of different length: typically one can find monthly, quarterly and yearly delivery periods. Obviously, one quarter consist of three months and one year of four quarters, such that there might be futures with overlapping delivery periods offered in the market. Compared to other markets, this gives rise to an additional dimension of arbitrage opportunities, which appears when trading in futures with overlapping delivery periods. Recently, \cite{BPV} provide new theoretical results concerning the no-arbitrage theory of overlapping additive futures models. In this framework, the present paper addresses the arbitrage free pricing and calibration of options on overlapping electricity  futures.\\

Trading in electricity derivatives is living a significant expansion in Europe. In June 2018, the European Energy Exchange increased volumes on its power derivatives markets by 28\% from 181.2 TWh (June 2017) to 231.1 TWh \cite{eex2}. In particular, the trading volume in power options experienced a boost of 45\%. Additionally, these contracts are traded over the counter. Thus, it is not a surprise that also the literature considering electricity options as well as  options on other commodities is booming. One stream of literature   considers  e.g. seasonality, stochastic volatility and Samuelson-like effects in pricing of power options (e.g. \cite{schmeck76,schmeck_rp,  fanelli,nastasi,nomikos,schmeck2016pricing,wottka}) or  commodity options (e.g. \cite{arismendi,back,schneider16,schneider18}). However, the issue of overlapping delivery periods is not addressed in the above mentioned articles. \\

When building no-arbitrage models with overlapping deliver periods (from now on called ``no-overlapping-arbitrage''  or shorter NOA) the focus must be twofold. First, one has to work in a risk neutral setting where traded contracts are martingales. When such a risk neutral measure exists, the first fundamental theorem of asset pricing guarantees that there are no arbitrage possibilities. In fact, when considering overlapping periods,  one must additionally take into account possible arbitrages arising from trading futures with different delivery periods (cf. \cite{latini}). The first paper that takes into account this fact is \cite{kieselSB}, where the authors apply an approach inspired by LIBOR market models \cite{brigo_mercurio}. To the best of our knowledge, this is also the first attempt to fit a consistent option pricing model for power markets. However, the generic LIBOR approach used in \cite{brigo_mercurio} allows to model directly only shortest delivery, i.e. monthly, contracts, while resorting to approximation to the distribution of longest delivery futures. The two-factor model of \cite{kieselSB} has been generalized by \cite[Section 5]{borger} to L\'evy models, allowing to describe the implied volatility (IV) surface of option prices. Furthermore, a more general approximation procedure is proposed there, but the results, though accurate for single deliveries, are partially satisfactory for longest maturity contracts, probably due to the required approximations. Here, we see a strength of our additive approach,  making the necessary approximations of the geometric approach useless.\\

Among recent contributions in power options modeling, we mention \cite{wottka}, who uses a structural model that explains the formation of IV skews, and \cite{nomikos}, who develop an extensive sensitivity analysis of IV patterns. However, differently from our setting, both papers consider options written on electricity spot prices. Another characteristic that we want to reproduce in our model is prices' seasonality (see also \cite{borovkova2017electricity}). Here, we address seasonality in the sense of dependence on the delivery of the underlying futures. This has been addressed by \cite{fanelli}, though in a Gaussian setting, which does not allow to capture the different IV levels displayed by options with different strikes. \\


In order to compute option prices, we introduce an arbitrage-free two-factor additive model for futures prices. In contrast to spot-based models, where the futures prices are computed by taking the discounted expectation of spot prices under a pricing measure, we start by specifying the stochastic dynamics of futures prices under a risk-neutral measure $\bQ$. This is known to literature as the Heath-Jarrow-Morton approach applied to commodity markets (see e.g. \cite{musti} for a literature review in the case of electricity). Since the parameters of our model will be estimated from the option market, this approach has several advantages as already pointed out e.g. by \cite{kieselSB}. For instance, we do not need to choose a pricing measure, or, equivalently, estimate the market price of risk that determines how the stochastic model changes from the physical measure to the risk-neutral measure under which the options are priced in the market. In fact, we can call our risk neutral measure $\bQ$, \emph{the} risk-neutral measure, since this is implicitly and univocally determined by the option prices observed in the market.
We refer to \cite{schmeck_rp} for an empirical  discussion on pricing measures for electricity derivatives. 

Since  our futures price model  is based on L\'evy processes, it allows to capture the implied volatility profile described by options with different strikes. 
The dynamics arises as the natural generalization of its Gaussian counterpart introduced in \cite{latini}. It assumes that futures prices are stirred by two stochastic factors built on Normal Inverse Gaussian (NIG) L\'evy processes modulated by deterministic coefficients depending both on time and delivery period. 
The NIG distribution is a flexible family of distributions that is very popular in financial modeling (see e.g. \cite{BN97} for general applications of NIG processes to finance, \cite{benth04} for an electricity spot price NIG model and \cite{andresen} for modeling electricity forwards). 
The first factor has a delivery-averaged exponential behavior, meant to reproduce the Samuelson effect \cite{lautier}. The second factor is independent of current time, but varies for contracts with different delivery in a \emph{seasonal} and \emph{no-arbitrage} way (cf. \cite{latini}) and accounts for a finer reproduction of the term structure of futures' volatilities.
We use an \emph{additive} model, meaning that we do not consider the log-prices, but we instead model directly the prices. This class of models, as opposed to geometric models, has recently gained an increasing attention in literature due to many modeling advantages (see, for example, \cite{MR2323278,kufa,BPV,gallana,kieselpara}). 
In the context of option pricing, \cite{kufa} exploit the additive structure of their spot dynamics for pricing Asian and spread options by fast Fourier techniques. In our case, the additivity property will allow us to fit our model consistently to all the delivery periods traded in the market. Instead, the calibration results of \cite[Section 5]{borger}, where the author studied a geometric version of our model (still based on two NIG factors) were  only partially satisfactory. As already mentioned at the beginning, this is likely due to the necessity to introduce an approximation procedure for the distribution of contracts covering periods of more than one month. In our case instead, by considering an additive model, we do not need to introduce such approximations, as we have an exact expression for the contract's distribution. 
For pricing options, we follow the classical method of \cite{CM}, which consists, roughly speaking, in computing the Fourier transform of (a suitable modification of) the call payoff as a function of the strike price so to recover the option value by inverse transform. 
We recall from \cite{CM} two different approaches, that we adapt to the case of additive models. The first approach is based on the use of an exponential damping factor in order to make the option value integrable on the whole real line. Instead, the second approach consists in substracting the time value of the option from its payoff. In this way we derive semi-analytical expressions (analytical up to numerical integration) for the option prices, that depend on the characteristic function of the underlying (see also \cite{MR2416686}). 

We discuss a calibration methodology that we will apply in our empirical study.  
The calibration happens statically, in the sense that we fix a trading date and observe the market option prices for different strikes and deliveries. 
We then find the parameters that minimize an objective function representing the distance from  observed prices to model prices. In the same way, one can alternatively use IVs instead of option prices. The futures model under consideration is defined in such a way that there is no possibility of arbitrages from trading in overlapping delivery periods. Because no-arbitrage implies certain relations on the coefficients, this  translates into parameter constraints (cf. the calibration procedure presented in \cite{latini}). 

Since there is not enough liquidity in the market in order to extract information on the IV surface from traded market quotes, we consider the options settlement prices, that are available for a sufficiently large range of strike prices. Though settlement prices do not necessarily represent trades that take place in the market, they contain information on market expectations. We perform the calibration procedure described above, first, for a one-factor model derived by the two-factor one by setting one coefficient to 0, and then for the general two-factor model. We compare the IVs of both models to the empirical IVs and the one (constant across strikes) generated by Black's model \cite{schmeck76,black}. 
As a by-product from the estimation of the one-factor model, we derive that, under the risk-neutral measure, futures prices are leptokurtic and have significantly positive skewness. This is reflected also into the shape of empirical IVs, which display a \emph{forward skew} (i.e. higher IVs for out-of-the-money calls). This can be interpreted as a ``risk premium'' paid by option buyers for securing supply (cf. \cite{birkelund,taylor11}). 
Finally, we show that the two-factor model is able to reproduce in a satisfactory way the different levels and shapes of the IV profiles displayed by all the deliveries traded in the market, by outperforming both the Black and the one-factor model.  \\

This paper is organized as follows. In the forthcoming Section \ref{sec:2} we state our modeling framework and discuss the no-overlapping-arbitrage  conditions  in this setting. Building on this,  we address options on futures as well as its pricing by fast Fourier methods in the following section. Then, Section \ref{sec:4} is devoted to the calibration procedure when considering overlapping delivery periods and Section \ref{sec:5} to the empirical study. Finally, Section \ref{sec:6} concludes.

\section{Additive multifactor models for futures contracts in an overlapping arbitrage free framework}\label{sec:2}

Let $F(t,T_1,T_n)$ denote the price at a given day $t$ of a futures contract which delivers a fixed intensity of electricity over the period $[T_1,T_n]$. This period is divided into subperiods $[T_i, T_{i+1}]$, for $i=1, \cdots, n-1$. Thus, the periods are overlapping. For example, the delivery period $[T_1,T_n]$ can be one quarter long, and is divided into four monthly periods, or one year and it is divided into four quarters. Since contracts expire right before delivery starts, we have that $t\leq T_1$.  We will state the multifactor model in this framework. Nevertheless, when it is sufficient to consider one period only, we will use the period $[T_1, T_2]$ representative for a single arbitrary period.
We introduce a general framework for multifactor additive futures prices (as, for example, in \cite{MR2416686}) and, from this, we focus on a two-factor model inspired by \cite{latini}, that will be of interest for application. More in detail, we introduce a stochastic evolution, parametrized by the delivery period (i.e. depending, in addition to the trading day $t$, also on $T_1,T_2$), driven by independent L\'evy factors. We also compute the corresponding characteristic functions, that will constitute the main ingredients in the computation of option prices (see Section 4.3). We will assume throughout this work that the risk-free interest rate is zero (similar arguments apply in the case of deterministic flat interest rate after discounting, see \cite{latini}).  


\subsection{Additive multifactor NOA models}

In the rest of the paper, we will consider European options written on futures contracts written on the delivery period $[T_1, T_2]$ and we will denote by $T<T_1$ the exercise date of these options. Therefore, for convenience we express the futures price at time $T$. 
We assume that, for any time $t$ before the exercise of the option, i.e. $0\leq t<T$, the futures prices $F(T,T_1,T_2)$ are given by 
the following stochastic differential equation (here given in integral form) 
\begin{align}\label{multifactor}
F(T,T_1,T_2)&=F(t,T_1,T_2)+\sum_{k=1}^p \int_t^T \Sigma_k(u,T_1,T_2)\,dW_k(u)
+\sum_{j=1}^m \int_t^T \Gamma_j(u,T_1,T_2)\,dJ_j(u)
\end{align}
where $W_k$ are independent Brownian motions for $k=1,\ldots,p$ and $J_j(u)=\int_0^u\int_\bR y\,\widetilde N_j(dy,dv)$ are independent, pure-jump, centered L\'{e}vy processes such that $\int_{|y|>1} y^2\,\nu_j(dy)<\infty$ for each $j=1,\ldots,m$. This integrability assumption on the L\'evy measure implies that $J_j$ are square-integrable martingales with zero expectation (for background on L\'evy processes see e.g. \cite{CT}). We assume that all the stochastic factors are independent, so that, in particular, the $m$ Poisson random measures are independent of the $p$ Brownian components. The dynamics above are described under a risk-neutral measure $\bQ$. The absence of the drift follows from the fact that, by no-arbitrage, futures prices must be martingales under $\bQ$ (see e.g. \cite{MR2057475}). 

It is often the case, in power markets, that futures written on overlapping periods are traded simultaneously: this originates from the so-called {\em cascade mechanism},  by which calendar and quarterly futures are split in contracts spanning smaller periods, see e.g. \cite{BPV} or \cite{latini} for a graphical illustration. In particular, it may happen that, for a period $[T_1,T_n]$, the futures $F(t,T_1,T_n)$ and $(F(t,T_i,T_{i+1}))_{i=1,\ldots,n-1}$ are all quoted at the same time $t < T_1$. In this case, a naturally arising condition on the value $V(t, K ;T_1, T_n)$ of a futures contract with price $K$ over the period $[T_1, T_n]$ and the value $V(t,K;T_i,T_{i+1})$ with the same price $K$ is
\begin{align}\label{eq:arbV}
V(t, K ;T_1, T_n)= \sum_{i=1}^{n-1}V(t,K;T_i,T_{i+1})\;.
\end{align}
That is, the value of receiving electricity for the fixed price $K$ over the period $[T_1, T_n]$ has to be the same as receiving the electricity for the same price over all partial periods.  Now, the forward price $K=F(t, T_1, T_n)$ is defined to be the price that makes the value $V(t, K ;T_1, T_n)$ of the contract being equal to zero. Thus, from   \eqref{eq:arbV} it follows the following no-overlapping-arbitrage (NOA) condition on the forward prices:
\begin{align}\label{multifactori}
F(t,T_1, T_n)= \frac{1}{T_n-T_1}\sum_{i=1}^{n-1}(T_{i+1}-T_i)F(t,T_i,T_{i+1}).
\end{align}
For more of this topic, see \cite{latini} or \cite{borger}. Considering the dynamics of \eqref{multifactor} and \eqref{multifactori}, the NOA condition on the futures price reads in our framework 

\begin{align}
\Sigma_k(t,T_1,T_n) &= \frac{1}{T_n-T_1}\sum_{i=1}^{n-1}(T_{i+1}-T_i)  \Sigma_k(t, T_i,T_{i+1}) \label{eq:NOA1}\\   
\Gamma_j(t,T_1,T_{n}) &= \frac{1}{T_n-T_1}\sum_{i=1}^{n-1}(T_{i+1}-T_i) \Gamma_j( t
,T_i,T_{i+1}) \label{eq:NOA2}
\end{align}
for all $t < T_1$, $k=1,\dots,p$ and $j=1,\dots, m$, see e.g. \cite{BPV}.

In order to apply the Fourier transform approach to option pricing, we need the characteristic function of the underlying process. Consider now a generic period $[T_1, T_2]$.
By introducing $Z(t,T,T_1,T_2) := F(T,T_1,T_2)-F(t,T_1,T_2)$ and its characteristic function (as a function of $v\in\bR$)
$$
\Psi(t,T,T_1,T_2,v) = \bE\left[ e^{ivZ(t,T,T_1,T_2)} \left|\mathcal{F}_t\right.\right],
$$
we have that (see e.g. \cite{MR2416686})
\begin{align}\label{cumulants}
\log\Psi(t,T,T_1,T_2,v) &= -\frac12 v^2 \sum_{k=1}^p \int_t^T \Sigma^2_k(u,T_1,T_2)\,du +\sum_{j=1}^n \psi_j(t,T;v\Gamma_j(\cdot,T_1,T_2)).
\end{align}
The function $\psi_j(t,T;\theta(\cdot))$ denotes 
$$
\psi_j(t,T;\theta(\cdot))=\int_t^T \widetilde\psi_j(\theta(u))\,du=\int_t^T \int_\bR (e^{i\theta(u) z}-1-i \theta(u) z )\,\nu_j(dz)\,du,
$$
where $\widetilde\psi_j(\theta)$ is the cumulant of the L\'{e}vy process $J_j$ computed in $\theta\in\bR$, i.e. 
$$
\widetilde\psi_j(\theta)=\log\bE\left[e^{i\theta J_j(1)}\right]
$$
and $\nu_j$ is the L\'{e}vy measure of $J_j$.

\subsection{A two-factor model based on Normal Inverse Gaussian processes}

In this section we consider a two-factor model of the type \eqref{multifactor} based on the Normal Inverse Gaussian (NIG) distribution. This is motivated by the framework \cite{BPV} and arises as a natural generalization of \cite{latini}. The model assumes that futures prices are stirred by two stochastic factors built on Normal Inverse Gaussian L\'evy processes modulated by deterministic coefficients. The first factor has a delivery-averaged exponential behavior, meant to reproduce the so-called Samuelson effect. This is an observed feature of prices volatilities, common to many commodity markets, consisting of increasing volatility of prices as time approaches maturity \cite{lautier}. For an analysis of the impact of the Samuelson effect on option pricing, see \cite{schmeck2016pricing} and \cite{schmeck76}. The second factor is independent of time, but varies for contracts with different delivery in a \emph{seasonal} and \emph{no-arbitrage} way (see \cite{BPV,latini}) and accounts for a finer reproduction of the term structure of futures volatilities. We remark that, since our model is additive, by ``volatility'' we mean the parameter (or function of parameters) that determines the variability of prices and not of log-prices as in geometric models.

Building upon \eqref{multifactor}, we assume that the 
stochastic evolution of  a generic 
future price $F(\cdot,T_1,T_2)$, delivering in the period $[T_1,T_2]$, from $t$ to $T$ is described by 
\begin{align}\label{two-factor}
F(T,T_1,T_2)=F(t,T_1,T_2)&+\int_t^T \Gamma_1(u,T_1,T_2)\,dJ_1(u)+\Gamma_2(T_1,T_2) (J_2(T)-J_2(t)),
\end{align}
where 
\begin{align}
\label{gamma1_3}
\Gamma_1(u,T_1,T_2) &:= \frac{1}{T_2-T_1} \int_{T_1}^{T_2} \gamma_1 e^{-\mu (\tau-u)} \, d\tau 
=\frac{\gamma_1(e^{-\mu (T_1-u)}-e^{-\mu (T_2-u)})}{\mu (T_2-T_1)},
\\ \label{gamma2_3}
\Gamma_2(T_1,T_2) &:= \frac{1}{T_2-T_1} \int_{T_1}^{T_2} \gamma_2(\tau) \, d\tau.
\end{align} This model is in the spirit of \cite{BPV,latini}. Thus, the special form of the coefficients arises from the implicitly underlying assumption that  $F$ can be written as the average over an underlying artificial futures price with instantaneous delivery.
By focusing our attention to the first component, the parameter $\gamma_1\in\bR^+$ models the \emph{base volatility}, that is the volatility of contracts resulting from the first component with distant delivery i.e. $T_1-t\to\infty$. This coefficient has an exponential rate given by $\mu\in\bR^+$, which determines an increase of volatility as time approaches delivery i.e. $T_1-t\to0$. This effect is averaged over the delivery period for no-arbitrage arguments (as explained in \cite{latini}) and mimics the Samuelson effect. Regarding the second component, the function $\gamma_2:[0,\infty)\to\bR^+$ models the general seasonal behavior of volatility (it takes high values in periods of high volatility and low values for periods of low volatility).  See \cite{borovkova2017electricity} for a time change model of seasonal volatility in electricity markets. The seasonal function $\gamma_2$ can be specified either in a parametric or nonparametric fashion. For instance, one can have
\begin{equation*}
\gamma_2(\tau) := \gamma_2 + b\tau+ \sum_{j=1}^m \bigl(a_{2j} \cos(\omega j \tau) + a_{2j+1}\sin(\omega j \tau) \bigr), 
\end{equation*}
with $\omega =  2\pi /365$, $m \in \bN$, $b\in\bR$ (for capturing possible deterministic linear trends), $(\gamma_2,a_2,\ldots,a_{2m+1}) \in \bR^{2m+1}$. \\
Note also that the driving L\'evy processes are the same for all contracts. This is also a result of the assumption of an underlying instantaneous futures dynamics (see \cite{BPV}, \cite{latini}). Note that we have fully specified the form of $ \Gamma_1(\cdot,T_\cdot, T_\cdot)$, where the coefficients $\gamma_1$ and $\mu$ are independent of 
the delivery period. With this specification,  the  NOA condition \eqref{eq:NOA2} for $\Gamma_1$ is naturally fulfilled. On the other hand, if the NOA condition has to be satisfied for the chosen form of $\Gamma_1(u,T_i, T_{i+1})$ as in \eqref{gamma1_3}, it is indeed not possible to choose $\gamma_1$ and $\mu$ different for each delivery period. Though, we leave unspecified the coefficient of the second factor, $\Gamma_2$, so to have more modeling flexibility that is able to account for a finer reproduction of the term structure. In the empirical analysis to come, we will use a non-parametric approach and estimate one value for each delivery period. Thus, the following condition has to be fulfilled
\begin{align}
\Gamma_2(T_1,T_{n})&= \frac{1}{T_n-T_1}\sum_{i=1}^{n-1}(T_{i+1}-T_i) \Gamma_2(T_i,T_{i+1})\;.
\end{align}

The coefficients in \eqref{gamma1_3} and \eqref{gamma2_3} modulate the variability of the two stochastic processes $J_1$ and $J_2$, which are both defined as centered versions of NIG L\'evy processes. Let us recall that a L\'{e}vy process is called Normal Inverse Gaussian with parameters $(\alpha,\beta,\delta,\mu)$ if its characteristic triplet is $(\chi,0,\nu)$ with
\begin{align}
\chi &= m + \frac{2\alpha\delta}{\pi} \int_0^1 \sinh(\beta x) K_1(\alpha x)\,dx,
\\\label{levy_measure}
\nu(dy)&= \frac{\alpha\delta}{\pi|y|} K_1(\alpha|y|) e^{\beta y}\,dy,
\end{align} 
where $K_1$ is the modified Bessel function of the third kind with index 1 (in the terminology of \cite[Section 9.6]{AS}), $0\leq|\beta|<\alpha$ and $\delta>0$, $m\in\bR$ (see \cite{BN}). 
Given a NIG process $L$, the random variable $L(1)$ is NIG distributed with parameters $(\alpha,\beta,\delta,\mu)$. The NIG distribution is a subclass of a very flexible family, the Generalized Hyperbolic distributions, and it can accomodate heavy-tails and skewness. The parameter $\alpha$ rules the tail heaviness of the distribution, $\beta$ determines the skewness, $\delta$ is a scale parameter and $\mu$ indicates the location of the distribution. 

In general, a NIG process is not centered. Therefore, in order to define $J_1$ and $J_2$ we subtract from two general NIG processes $L_1,L_2$ the corresponding expected value (multiplied by time). For $j=1,2$, let $L_j$ be a NIG L\'{e}vy process with parameters $(\alpha_j,\beta_j,\delta_j,m_j)$ and characteristic triplet $(\chi_j,0,\nu_j)$ and set 
$$
J_j(t) = L_j(t) - \bE[L_j(t)] = L_j(t) - t \left(\chi_j+\int_{|y|\geq 1} y\,\nu_j(dy)\right). 
$$
Then, $J_j$ is a centered NIG process. In particular, it can be easily shown (cf. the characteristic function of $J_j$ in \eqref{ch_fun_center} to see this) that $J_j$ is a NIG process with parameters $\left(\alpha_j,\beta_j,\delta_j,-\frac{\delta_j\beta_j}{\sqrt{\alpha_j^2-\beta_j^2}}\right)$.

As already mentioned, in order to compute the option prices, we will need the characteristic function of $F(T,T_1,T_2)$. This in turn reduces to finding the characteristic function of $Z(t,T,T_1,T_2) := F(T,T_1,T_2)-F(t,T_1,T_2)$. We compute it in the appendix in order not to make the presentation unnecessarily heavy.

\section{Option pricing for additive models by Fourier transform methods}

\label{sec:3}

We consider the pricing of European vanilla options written on futures contracts of the type introduced in the previous section. We discuss the method for call options, being the case of puts completely analogous. 
Let $C(t;T,K,T_1,T_2)$ be the price at time $t$ (the observation date) of a call option, with strike price $K$ and exercise time $T$, that is written on a futures contract with delivery period $[T_1,T_2]$. By no-arbitrage (see, for instance, \cite{MR2057475}), 
\begin{equation}\label{call}
C(t;T,K)=\bE\left[ (F(T)-K)_+\left| \mathcal{F}_t \right. \right],
\end{equation}
where $F(T)$ is the price of the underlying futures at the option exercise as in \eqref{multifactor} and $\mathcal{F}_t$ represents the filtration at time $t$, i.e. the information flow up to time $t$. In order to ease the notation, sometimes we will not write the dependence on the delivery period, which does not come into play in this discussion. We recall that the expectation is taken under the risk-neutral measure $\bQ$, even though it is not explicitly indicated, and throughout this work $\bQ$ will be the only probability measure we will deal with. 
By applying the definition of conditional expectation, we can write
\begin{equation}\label{density}
C(t;T,K)=\int_{K}^{+\infty} (s-K)\,q_{t,T}(s)\,ds,
\end{equation}
where $q_{t,T}$ is the (risk-neutral) density function of $F(T)$ conditioned up to time $t$. This formula yields an expression for the option value, for instance, when the distribution of the underlying $F$ allows for an explicit formula for the density, that, moreover, can be integrated against the payoff function in a tractable way (as it is the case in the Black model \cite{black}, where the underlying follows a geometric Brownian motion). 

We follow the alternative approach of \cite{CM}, which consists, roughly speaking, in computing the Fourier transform of \eqref{call} as a function of $K$ (after proper manipulations) so to recover the option value by inverse transform. This has been studied by several authors and it has been shown to be a very convenient way to compute option prices in the case that the characteristic function of the underlying is known explicitly, while the density is not. 
The starting point of the above mentioned approach is the observation that, as $K$ goes to $-\infty$, $C(t;T,K)\to\infty$, so that in particular $C(t;T,K)$ is not integrable as a function of $K$ for large negative values. This means that the option value $C(t;T,K)$ does not satisfy the assumptions required for computing its Fourier transform. In order to overcome this, we follow \cite{CM}, who suggest two approaches that we here recall and apply to the case of additive models. 

The first approach requires to modify the option value with a damping term $e^{a K}$, where $a > 0$ in our case should be such that $\bE[e^{a Z(t,T,T_1,T_2)}] < + \infty$. While this does not pose problems in Gaussian models, where any $a > 0$ would satisfy this condition, in our case this condition will depend on the NIG parameters, still unknown. Moreover, even if theoretically any ``good" $a$ would deliver the same result, it is a well-known result that ``extreme" values for $a$ (i.e., too close to 0 or to the upper bound) give numerical instabilities: thus, one should also optimize with respect to $a$ in order to have good numerical results.

 For the arguments above, we instead choose to use
the second approach,  which 
consists in substracting the time value of the option\footnote{ For the sake of completeness, we performed the calibration of Section 5 also with the first approach, which led to results analogous to those that we present here, but with the additional complications mentioned above.}. 
Following \cite{CM} (see also \cite{MR2042661}), define the modified time value of the option (as a function of the strike $K\in\bR$) by 
\begin{equation}\label{mod_time}
z^{MT}_{t,T}(K)=C(t;T,K)-(F(t)-K)_+,
\end{equation}
and, if it is square-integrable, compute its Fourier transform
\begin{equation}\label{xi_mt}
\xi^{MT}_{t,T}(v)=\int_{-\infty}^{+\infty} e^{ivK}\,z^{MT}_{t,T}(K)\,dK.
\end{equation} 
The modified option price \eqref{mod_time} is then recovered by Fourier inversion after integrating by parts: this is derived in detail in the appendix.

\section{Calibration procedure}\label{sec:4}

As we have derived in the previous section semi-analytical expressions (analytical up to integration) for the option prices, we now move to discussing a calibration methodology that we are going to apply in our empirical study. 
First, we discretize the option value that is given in integral form in 
\eqref{z_inv_re2} (see the appendix) 
in the domain of integration. Then, we select a finite grid of strike prices, that consists in practice of the listed options available in the market for a given underlying. This procedure reduces the valuation problem to the computation of a finite sum of vectors, where each component is the option price for a given strike. Then we introduce a least squares problem designed to find the parameters that minimize an error function (for a discussion about the choice of the error function, we refer to \cite[Section 5.4]{borger}). We can compute it for two possible quantities, the model error on option prices and on the IVs given by Black's formula. By minimizing on IVs rather than option prices, we weight at the same way contracts with different maturity. However, minimizing on prices is preferable in terms of speed of computation, as it does not require the inversion of Black's formula at each step.  
We separate the calibration routine in three cases, of increasing generality, in order to discuss the presence of different constraints case-by-case. 

\subsection{Discretization of model option prices}

The quantity that we have to discretize in \eqref{z_inv_re2} takes the following form: 
\begin{equation}\label{truncation}
z_{t,T}(K)=\frac{1}{\pi} \int_{0}^{+\infty} \mathrm{Re}\left( e^{-iKv}\,\xi_{t,T}(v)\right)\,dv.
\end{equation}
First, we choose an upper limit $A\in\bR^+$ in the above integration (see \cite[Section 3.1]{CM} for a discussion on how to do this optimally). Then, if we apply a simple  Euler rule to the truncated integral, we find an expression of the form
\begin{equation}\label{euler}
z_{t,T}(K)\approx\frac{1}{\pi} \sum_{j=1}^{N} \mathrm{Re}\left(e^{-i v_j K}\,\xi_{t,T}(v_j)\right)\eta,
\end{equation}
where $N\in\bN$, $\eta := A/N$ and $v_j:=\eta(j-1)$ is the integration step. Given $M\in\bN$ different strikes with granularity $\kappa>0$, we compute the function in \eqref{euler} for the following values of $K$:
$$
K_u:=\overline K+\kappa(u-1), \qquad\mbox{for } u=1,\ldots,M,
$$
being $\overline K$ the lowest strike price traded. By plugging this in \eqref{euler}, we get for $u=1,\ldots,M$ 
\begin{equation}\label{z_discrete}
z_{t,T}(K_u)\approx w_{t,T}(K_u):=\frac{1}{\pi} \sum_{j=1}^{N} \mathrm{Re}\left(e^{-i \kappa\eta(j-1) (u-1)} e^{-i \overline K \eta(j-1)}\,\xi_{t,T}(\eta(j-1))\right)\eta.
\end{equation}

As pointed out in \cite{CM}, this formula is suitable for the application of the fast Fourier transform (FFT) algorithm. In order to do this, one must impose that the number of strike prices considered is equal to the number of integration nodes, i.e. $M=N$ (which is typically chosen as a power of 2). Moreover, it must hold that   
$$
\kappa\eta=\frac{2\pi}N,
$$
which consists of a trade-off between the grid for the integration and the granularity of strike prices. In particular, since in practice the granularity $\kappa$ is given, this equality  univocally defines the integration grid as a function of $N$. However, in our application we will not make use of the FFT algorithm and so, in particular, we will not impose the above restrictions on $N,M,\kappa,\eta$. This is motivated by the fact that we do not have a significant advantage in the computational complexity of the problem, being the number of strikes consistently lower than $N$. Furthermore, the focus of our work is not on the speed-up of the calibration procedure, but rather on empirical results, so that we do not exclude the possibility to apply the FFT algorithm, being still possible from a theoretical point of view. 

\subsection{Parameters estimation}

Since we have at disposal formulas in discrete form that can be readily implemented, we can now discuss how to fit the two-factor model to market data. 
The calibration happens statically, taking a snapshot of the market, in the sense that we fix a trading date and observe the market option prices for that date. 
We then find the parameters that fit a certain distance from theoretical to model prices best.  
As mentioned at the beginning of this section, one can alternatively use IVs instead of option prices. The futures model under consideration is defined in a such a way that there is no possibility of arbitrages, also in the case of overlapping delivery periods. No-arbitrage implies certain relations on the coefficients, that translate into parameter constraints. In order to highlight this, we present the objective function and the parameters set for the following three cases: single underlying, many underlyings but non-overlapping delivery periods, and the general case of possibly overlapping delivery periods. 

\subsubsection{Black's formula}

The first and most used in practice model for option prices written on futures is the Black model \cite{black}. It assumes that the underlying follows a geometric Brownian motion as in the Black-Scholes formula. Also, the expression for the call price is very similar to the Black-Scholes one, with the futures price replacing the stock price, but with different discounting. Since we are assuming that the risk-free rate is zero, the Black-Scholes and the option price given by the Black model are actually the same in our case. We recall the Black formula here because, in addition to use it as a benchmark in the upcoming empirical application, we use it to compute the implied volatility of both market and model prices:
\begin{equation}
C_{BS}(t;T,K) = F(t,T)N(d_1)-KN(d_2),
\end{equation}
where $N$ denotes the cumulative distribution function of a standard Normal random variable and
$$
d_1 = \frac{\log{\frac{F(t)}{K}}+\frac12 \sigma^2(T-t)}{\sigma\sqrt{T-t}},\qquad d_2= d_1 - \sigma\sqrt{T-t}.
$$
The implied volatility of a given option with price $P$ and strike $K$ is defined as the only (positive) number $\sigma$ such that the Black formula for a strike $K$ and volatility $\sigma$ (all other quantities being equal) gives the price $P$. The two-factor model aims to reproduce the IV profile of market option prices, that is the plot of $\sigma$ with respect to $K$. Black's formula yields constant implied volatility with respect to $K$, while real option prices usually display smiles or smirks (i.e. the IV is not constant and shows a certain convexity). 

\subsubsection{Single  underlying contract}\label{sec:sc}

Assume that we observe at a certain date $t<T$ the prices $c_*(t,T,T_1,T_2,K_u)$ of a call option for $M$ different strike prices $K_u$, with exercise at time $T$  written on a futures with delivery over $[T_1,T_2]$. 
To fit the model to the observed prices we search for the parameters that minimize a certain error function. Specifically, we introduce the following least-squares problem: 

\begin{equation}
\widehat \theta :=\underset{\theta\in\Theta}{\arg\min} \sum_{u=1}^M |c(t,T,T_1,T_2,K_u)-c_*(t,T,T_1,T_2,K_u)|^2,
\end{equation}
where 
the set $\Theta$ contains all the parameters appearing in the approximation formula of the model option price $C(t,T,T_1,T_2,K_u)\approx c(t,T,T_1,T_2,K_u)$ where 
\begin{equation}\label{call_semi}
c(t,T,T_1,T_2,K_u):=
w_{t,T}(K_u)+ (F(t,T_1,T_2)-K_u)^+
\end{equation}
where $w_{t,T}(K_u)$ for $u=1,\ldots,M$ is the discrete function in \eqref{z_discrete}.

If we assume that $F(T,T_1,T_2)$ is given by a two-factor pure-jump model of the form \eqref{two-factor}
where $\Gamma_i$ are defined in \eqref{gamma1_3} and \eqref{gamma2_3} and $J_i$ is a centered NIG L\'{e}vy process with parameters $(\alpha_i, \beta_i, 1)$, then $\Theta=\{\theta=(\alpha_1,\alpha_2, \beta_1,\beta_2,\mu,\gamma_1,\Gamma_2(T_1,T_2))\in(\bR^+)^2\times(\bR^+_0)^4\times\bR^+ : 0\leq|\beta_j|<\alpha_j\}$. We do not need to indicate the parameter $m$ of the original NIG L\'{e}vy process $L_i$ (see Section 3.2) since it does not appear in the centered version. Also, the parameter $\delta$ is assumed to be $1$ without here because of the presence of the multiplying factors $\Gamma_1(u,T_1,T_2)$ and $\Gamma_2(T_1,T_2)$, thanks to the property that, given a NIG$(\alpha, \beta, \delta)$ distributed random variable $X$, for a constant $\gamma>0$, $\gamma X$ is distributed as a NIG$(\alpha/\gamma, \beta/\gamma, \delta\gamma)$. This means that letting $\delta$ vary in the parameters set would result in an overparametrization of the minimization problem.

\subsubsection{Non-overlapping futures}

Let us assume that we are at time $t$ and observe $I$ call options written on futures contracts with non-overlapping delivery periods $[T_{i,1},T_{i,2}]$ (for example Jan/YY, Feb/YY, Mar/YY), exercise at $T_i$ and $M_i$ strike prices 
$$
K^i_{u} :=\overline K_i+\kappa_i(u-1),
$$ 
for $u=1,\ldots,M_i$ and $i = 1,\ldots,I$. 
In the case of more than one contract, we introduce the following least-squares problem:
\begin{equation}\label{estimator}
\widehat \theta :=\underset{\theta\in\Theta}{\arg\min} \sum_{i=1}^I \sum_{u=1}^{M_i} |c(t,T_i,T_{i,1},T_{i,2},K^i_u)-c^*(t,T_i,T_{i,1},T_{i,2},K^i_u)|^2.
\end{equation}
where $\Theta$ is now the set of parameters $\{\theta=(\alpha_1,\alpha_2, \beta_1,\beta_2,\mu,\gamma_1,\{\Gamma_i\}_{i=1}^I)\in(\bR^+)^2\times(\bR^+_0)^4\times(\bR^+)^I : 0\leq|\beta_j|<\alpha_j\}$.
For $I=1$ we recover exactly the previous case. 

\subsubsection{Whole market}\label{sec:wholemarket}

In general options traded in power markets are written on $I$ futures contracts with different delivery length and possibly overlapping. For example, one can have Apr/YY, May/YY, Jun/YY, Jul/YY, Q2/YY, Q3/YY, Cal-YY simultaneously traded. As explained in \cite{latini}, in order to estimate the parameters in a consistent, i.e. arbitrage-free, way, we have to take into account additional constraints on the parameters. From Equation \eqref{gamma2_3}, it can be shown that the parameters $\Gamma_i$ satisfy the following constraints
\begin{equation}\label{96}
\Gamma_i = \Gamma_2(T_{i,1},T_{i,2}) = \sum_{j=1}^{n} \frac{T_{j,2} - T_{j,1}}{T_{i,2} - T_{i,1}} \Gamma_2(T_{j,1},T_{j,2}) = \sum_{j=1}^{n} \frac{T_{j,2} - T_{j,1}}{T_{i,2} - T_{i,1}} \Gamma_j \end{equation}
whenever $[T_{i,1},T_{i,2}]$ is the union of disjoint intervals $[T_{j,1},T_{j,2}]$ for $j=1,\ldots,n$, i.e. for all the forwards with overlapping delivery, see Section \ref{sec:2} for details. Let us call \emph{atomic}, the contracts whose delivery period can not be partitioned by the delivery periods of other futures. In other words, we suppose that $m$ forwards $F_1,\ldots,F_m$ have non-overlapping delivery periods $[T_{1,1},T_{1,2}],\ldots,[T_{m,1},T_{m,2}]$ and such that the delivery periods of the other contracts traded in the market can be expressed as union of the former.  For example, assume that we observe in the chosen calibration window the option prices for futures contracts delivering over Apr/YY, May/YY, Jun/YY, Jul/YY, Q2/YY, Q3/YY, Cal-YY. On one hand, Q2/YY is not atomic, since it can be ``splitted'' into Apr/YY, May/YY and Jun/YY. On the other hand, Q3/YY turns out to be atomic, even if Jul/YY is already traded, as Aug/YY and Sep/YY are not observed. For the same reason, Cal-YY is considered atomic as well. Then, in this example, if $\Gamma_1,\ldots,\Gamma_{7}$ denote the corresponding parameters, we have that $\Gamma_1,\Gamma_2,\Gamma_3,\Gamma_4,\Gamma_5,\Gamma_7$ are free parameters, as they refer to atomic contracts, whereas to determine $\Gamma_5$ we use Equation \eqref{96}. Consequently, we define the same statistics as in \eqref{estimator} but where now the vector of parameters is subject to the additional constraints given by Equation \eqref{96}. For example, with the convention that $\Gamma_{\mathrm{Q2/YY}}$ denotes the parameter $\Gamma_i$ corresponding to the contract Q2/YY, then
\begin{equation}\label{94}
\Gamma_{\mathrm{Q2/YY}} = u_{\mathrm{Apr/YY}} \Gamma_{\mathrm{Apr/YY}} +u_{\mathrm{May/YY}} \Gamma_{\mathrm{May/YY}} + u_{\mathrm{Jun/YY}} \Gamma_{\mathrm{Jun/YY}},
\end{equation}
where the weights $u_i$ are defined according to the number of days in the month/quarter (e.g. for Apr/YY we have $u_{\mathrm{Apr/YY}}=30/91$).

Thus, we summarize the optimization problem with NOA-conditions:

\begin{equation}\label{estimator2}
\widehat \theta :=\underset{\theta\in\Theta}{\arg\min} \sum_{i=1}^I \sum_{u=1}^{M_i} |c(t,T_i,T_{i,1},T_{i,2},K^i_u)-c^*(t,T_i,T_{i,1},T_{i,2},K^i_u)|^2.
\end{equation}
where $\Theta$ is now the set of parameters $\theta=(\alpha_1,\alpha_2, \beta_1,\beta_2,\mu,\gamma_1,\{\Gamma_i\}_{i=1}^I)\in(\bR^+)^2\times(\bR^+_0)^4\times(\bR^+)^I $ such that 
\begin{align*}
 0\leq|\beta_j|<\alpha_j,\;\\
 \Gamma_i =  \sum_{j=1}^{n} \frac{T_{j,2} - T_{j,1}}{T_{i,2} - T_{i,1}} \Gamma_j\;.
\end{align*} 

\newpage 

\section{Empirical study}\label{sec:5}

In this section we describe our dataset, compute and discuss the empirically observed volatility implied by settlement prices. We compare the performance of three models: a purely Gaussian model,  a one-factor model (being a special case of the two-factor one) and the general two-factor model.  

\subsection{Data description}

The contracts that we consider in our application are European-styled call options traded at the EEX Power Derivatives market. The underlying assets are futures contracts that prescribe the delivery of 1 MWh for each hour of each day of a month, a quarter or a year. More specifically, we will consider call options written on the Phelix Base index of the German/Austrian area. These options are called Phelix Base Month/Quarter/Year Options and the EEX official product codes are O1BM, O1BQ, O1BY. The term \emph{Base} refers to \emph{Base Load}, because the delivery of electricity takes place for each hour of the day, in contrast to \emph{Peak Load} contracts, that instead prescribe the delivery only for the hours from 8 to 20. Usually, the exercise of the options under consideration is few trading days before the start of delivery. Since recently, yearly options are available for four different exercise dates. We consider in our dataset only the ones expiring few days before delivery, in analogy to quarterly and monthly contracts. Since there is not enough liquidity in the market in order to extract information on the IV surface from traded market quotes, we consider the settlement prices, that are available for a sufficiently large range of maturities and strikes: though settlement prices do not represent trades that really take place in the market, they contain information on market expectations.  
We observe the market for a representative day: Monday, March 5, 2018. For each option, we consider the strike prices in the range 90\%--110\% of the underlying current price (as, for example, in \cite{arismendi,back}). At this date the listed options with available settlement prices are the ones written on five monthly (Apr/18, May/18, Jun/18, Jul/18, Aug/18), six quarterly (Q2/18, Q3/18, Q4/18, Q1/19, Q2/19, Q3/19) and three yearly (Cal-19, Cal-20, Cal-21) futures.\\

Before moving on with the calibration procedure, let us comment on the empirical market-implied volatilities first, which we have calculated by inverting Black's formula. They are displayed in Figure \ref{fig:surface}. We observe a very pronounced smile for the contracts with forthcoming delivery, i.e. the contracts with delivery in April 18 and in the second quarter. Furthermore we observe that the farther the  beginning of the delivery period is in the future, the less pronounced is the smile: for the next beginning periods in May, June and July the smile is present but less and less pronounced, and already in August and the Q3 contract there is still a tendency to smile, but not pronounced at all. 
Generally we observe a \emph{forward skew} (i.e. higher IVs for out-of-the-money calls). This can be interpreted as a ``risk premium'' paid by option buyers for securing supply. Remember however, that we consider only strikes  in a range of about $90 \% - 110 \%$ of at-the-moneyness due to illiquidity considerations.

Remember that, in our model, the smile is produced by the presence of jumps. More precisely, we consider two independent  NIG processes  as the stochastic drivers of the two factors: the first factor captures the short time behavior modeled by $\Gamma_1(u,T_1,T_2)$ defined in Equation \eqref{gamma1_3}, also referred to as the Samuelson effect; the second factor, modeled by   $\Gamma_2(u,T_1,T_2)$ as in \eqref{gamma2_3}, captures seasonal behavior depending on both the delivery period and the no-overlapping-arbitrage condition \eqref{94}. 

From the observation of the market implied volatilities, we have some expectations on what the estimated parameters of our futures model should 
 be. First of all, we should get better estimates incorporating the Samuelson-factor, as we do observe a different smile behavior close to delivery and far away from delivery. We observe a very pronounced smile shortly before delivery, and a less pronounced smile far away from delivery. Consequently, the estimated NIG parameters of the Samuelson factor should be such that the resulting distribution is far away from a Normal distribution.  Furthermore, the estimated NIG parameters of the seasonal factor should be such that they are closer to a normal distribution.

\subsection{Calibration}

We perform the calibration procedure described in Section 3.4.3, first for the one-factor model derived by \eqref{two-factor} by setting the first coefficient to 0, i.e. $\Gamma_1(u,T_1,T_2)\equiv0$, and then for our general two-factor model. We compare both models to the empirical IVs and the one (constant across strikes) generated by Black's model, that we estimate with the same procedure of the other models (except for the computation of the price, which can be computed analytically for the Black model). We show the results for the minimization on market prices (see Section 4.4), since the calibration is faster and, even though the case of IVs yields by definition a lower residual, it gives very similar results. 

After numerical experiments with test parameters, the integral in \eqref{truncation} has been truncated at $A=10$ and computed by an adaptive Simpson quadrature rule already implemented in MATLAB (as in \cite{nomikos}), which takes into account the oscillatory behavior of the integrand (cf. \cite{CM}). 

We find that, in general, the optimization routine falls into local minima. However, by selecting a starting condition that is ``sufficiently close'' to the observed IVs, the minimization converges (cf. \cite{CT} for well-posedness of this kind of problems). 
As a by-product from the estimation of the one-factor model, we derive that, under the risk-neutral measure, futures prices are leptokurtic and have significantly positive skewness. This is reflected also into the shape of empirical IVs, which,  as we already mentioned, display a \emph{forward skew} (i.e. higher IVs for out-of-the-money calls). 
 The IVs of market and model prices are plotted in Figures \ref{fig:1_ivm1}--\ref{fig:3_iv}.

\subsection{Results}

\begin{figure} 
	\begin{subfigure}{0.5\textwidth}
		\includegraphics[width=\linewidth]{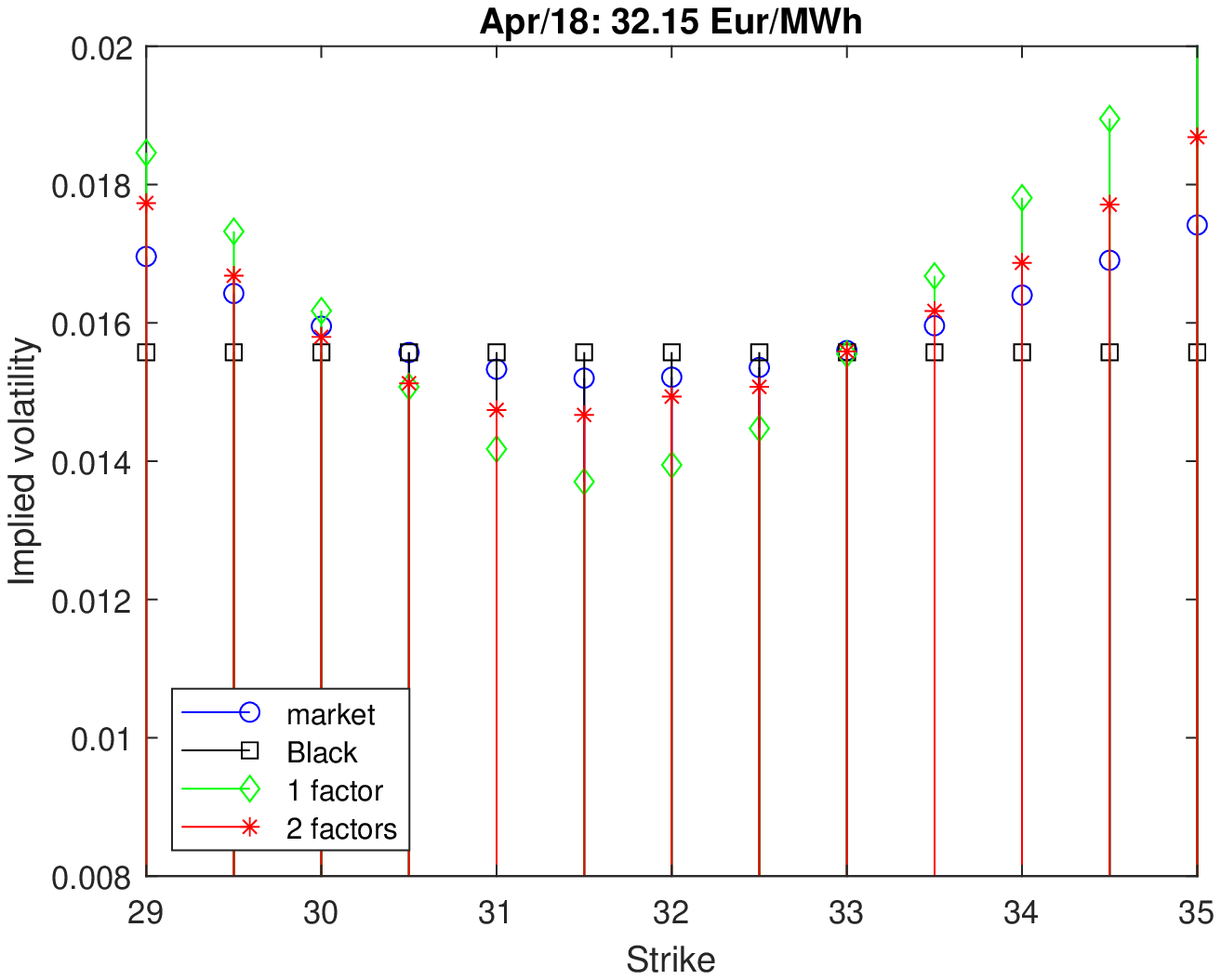}
	\end{subfigure}\hspace*{\fill}
	\begin{subfigure}{0.5\textwidth}
		\includegraphics[width=\linewidth]{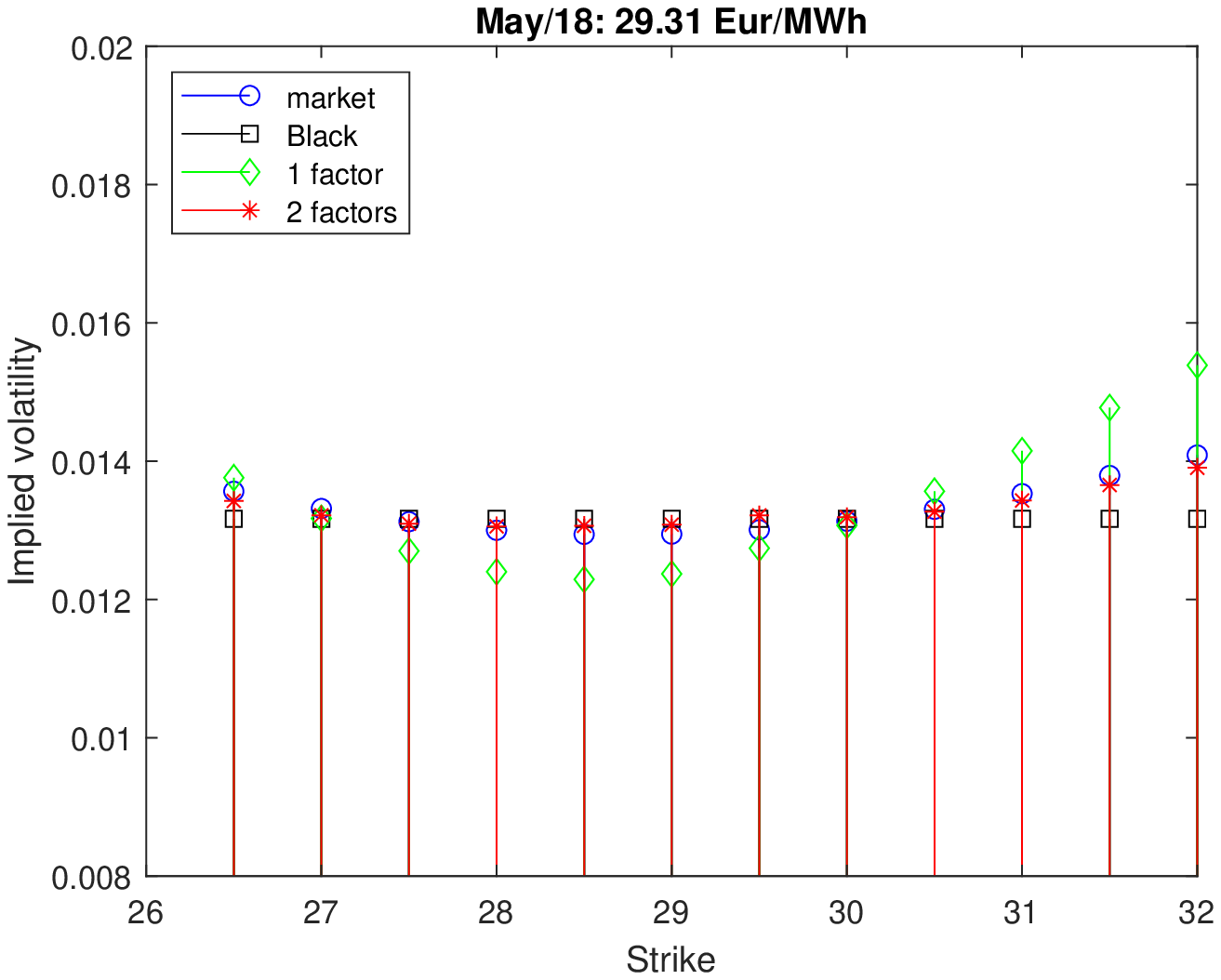}
	\end{subfigure}
	\caption{Selection of monthly delivery periods. Implied volatility for the Black, one-factor and two-factor model compared to the empirical implied volatilities of  options listed at March 5, 2018 (the corresponding underlying current price is indicated above each plot).} \label{fig:sel}
\end{figure}

We first address the estimated parameters for the futures model, before we comment on the resulting fit of market implied volatilities.
We have estimated the distribution of the driving NIG process. Remember though that we consider centered L\'evy processes, thus $\mu=0$ and $\delta=1$ due to the scaling property of the NIG distribution (see Section \ref{sec:sc}). The estimated parameters are reported in Table \ref{tab:parameters_1f} and \ref{tab:parameters} for the one factor and two factor model respectively. 

\begin{table}[htb]
	\centering
	\begin{small}
		\begin{tabular}{rrrrrrrr}
\hline
			$\alpha$ & $\beta$ &  $\Gamma_2(1)$ &  $\Gamma_2(2)$ & $\Gamma_2(3)$ & $\Gamma_2(4)$ & $\Gamma_2(5)$ & $\Gamma_2(6)$ \\
			0.0059 & 0.0019 & 0.0464 & 0.0327 & 0.0315 &  0.0311 &  0.0293 & 0.0368 \\
\hline
  $\Gamma_2(7)$ &  $\Gamma_2(8)$ &  $\Gamma_2(9)$ &  $\Gamma_2(10)$ &  $\Gamma_2(11)$ & $\Gamma_2(12)$ & $\Gamma_2(13)$ & $\Gamma_2(14)$  \\
			0.0284  & 0.0310  & 0.0304 & 0.0211  & 0.0209 & 0.0271 & 0.0255 & 0.0244 \\
\hline
		\end{tabular}
	\end{small}
	\caption{Calibrated parameters for the one factor model at 2018, March 5th.}
	\label{tab:parameters_1f}
\end{table}

\begin{table}[htb]
	\centering
	\begin{small}
		\begin{tabular}{rrrrrrrrrrrrrrr}
\hline
			$\alpha_1$ & $\beta_1$ & $\alpha_2$ & $\beta_2$ & $\gamma_1$ & $\mu$ &  $\Gamma_2(1)$ &  $\Gamma_2(2)$ & $\Gamma_2(3)$ & $\Gamma_2(4)$ \\
			0.1890 & 0.0586 & 0.0005 & 0.0002 & 0.1656 &  0.0044 &  0.0129 & 0.0054 & 0.0060 & 0.0068\\
\hline
			$\Gamma_2(5)$ & $\Gamma_2(6)$ &  $\Gamma_2(7)$ &  $\Gamma_2(8)$ &  $\Gamma_2(9)$ &  $\Gamma_2(10)$ &  $\Gamma_2(11)$ & $\Gamma_2(12)$ & $\Gamma_2(13)$ & $\Gamma_2(14)$  \\
			0.0064  & 0.0081  & 0.0066 & 0.0091  & 0.0093 & 0.0055 & 0.0057 & 0.0093 & 0.0084 & 0.0078\\
\hline
		\end{tabular}
	\end{small}
	\caption{Calibrated parameters for the two factor model at 2018, March 5th.}
	\label{tab:parameters}
\end{table}

In Figure \ref{fig:density} we have plotted the theoretical NIG density together with the corresponding normal density. We see that the Samuelson factor is far away from being Gaussian, exhibiting rather fat tails. Also the estimated density of the seasonal component clearly deviates from the Gaussian shape, but  in a less pronounced way. This confirms our expectations and justifies our choice of pure jump L\'evy processes. The seasonal component has a tendency to model  more ``normal" movements, while the Samuelson component is able to account for rare, bigger movements.  This analysis is complemented by Table \ref{tab:moments}, where we have presented  the estimation of the moments of the NIG driving factors. Note that these are the (risk-neutral) moments of the jump drivers $J_1(1)$ and $J_2(1)$ only, i.e. without considering the coefficients and for fixed time equal to 1. By accounting for the coefficients and integrating in time, one would get the (risk-neutral) moments of the futures prices. 

\begin{figure}[htb] 
	\begin{subfigure}{0.5\textwidth}
		\includegraphics[width=\linewidth]{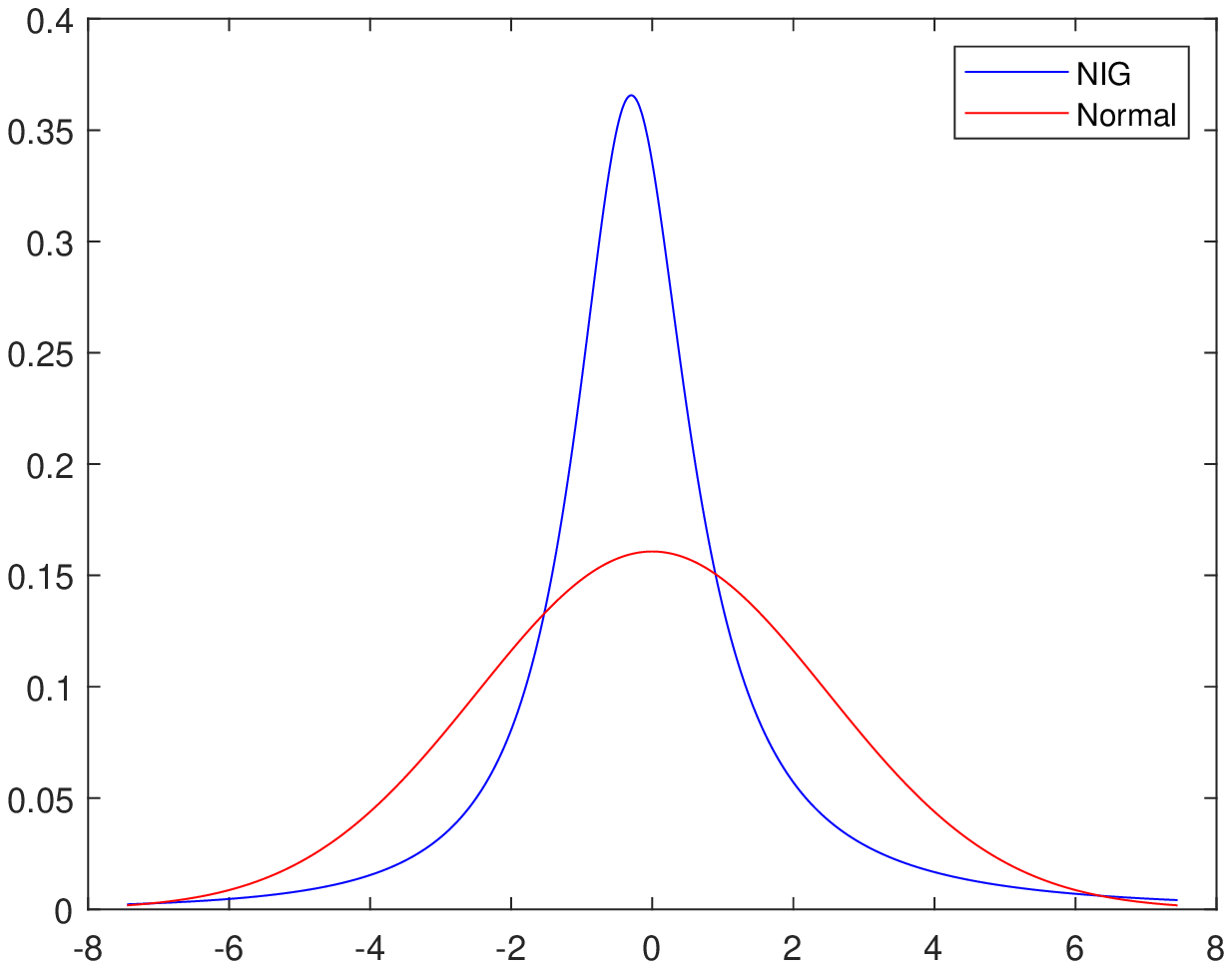}
	\end{subfigure}\hspace*{\fill}
	\begin{subfigure}{0.5\textwidth}
		\includegraphics[width=\linewidth]{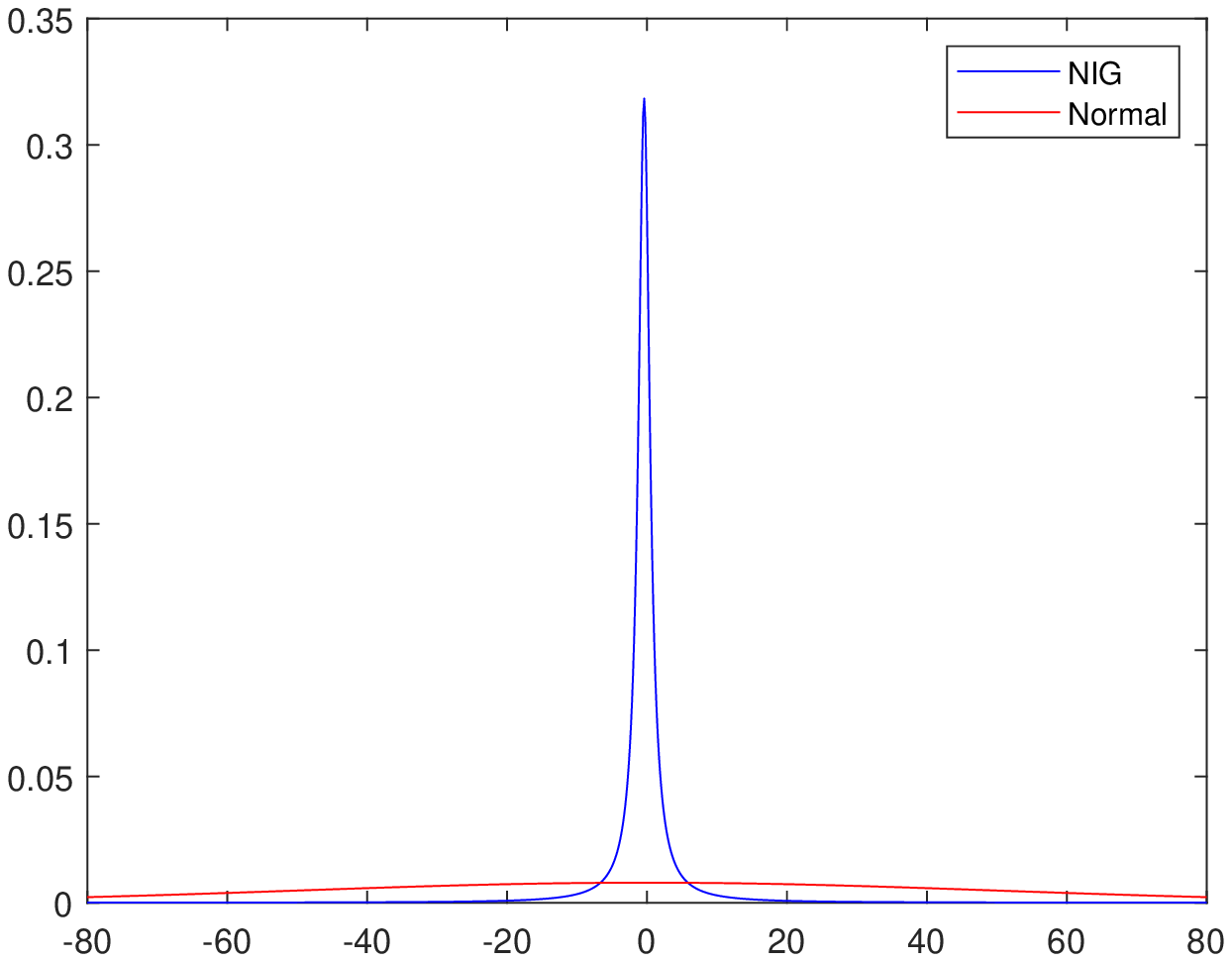}
	\end{subfigure}
	
	\caption{Densities of the NIG factors $J_1$ (left), $J_2$ (right) compared to the corresponding Gaussian densities (i.e. Gaussian density with same mean and variance).} \label{fig:density}
\end{figure}

\begin{table}[htb]
	\centering
	\begin{small}
		\begin{tabular}{crrrrrrrrrrrrrrr}
{\sc component}		& {\sc mean} & {\sc variance} & {\sc skewness} & {\sc excess kurtosis} \\
\hline
$J_1$		& 0   & 6.1603  &  2.1964  & 23.1341 \\
$J_2$		& 0 & 2508.1 & 53.197 & 10192 \\
		\end{tabular}
	\end{small}
	\caption{Moments of the NIG driving factors $J_1$ (top) and $J_2$ (bottom).}
	\label{tab:moments}
\end{table}

We now start to discuss the remaining estimates, that are reported in Table \ref{tab:parameters_1f} and \ref{tab:parameters} for the one factor and two factor model respectively. We would like to draw the attention to the estimates $\Gamma_i$ of the overlapping delivery periods that are restricted through the NOA-condition in \eqref{96}. For our dataset, this is Apr/18, May/18, Jun/18,  as well as Q2/18. The corresponding estimates are  illustrated in Figure \ref{fig:gamma22}.

\begin{figure}
	\centering
	\includegraphics[width=0.7\linewidth]{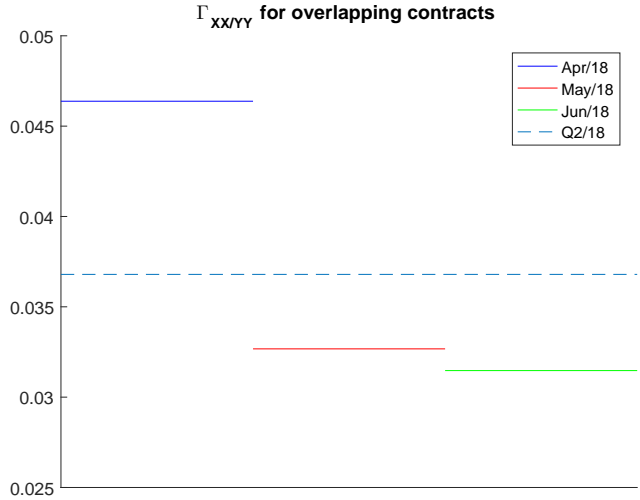}
	\caption{$\Gamma_2$ for overlapping monthly delivery (solid line) and quarterly delivery (dashed lines).}
	\label{fig:gamma22}
\end{figure}


After addressing the properties of our estimated futures model, we move on to discussing the fit of volatility smiles. We divide the discussion depending on the time to maturity -- forthcoming delivery periods (Apr/18, Q2/18), delivery periods that are not immediately forthcoming (May/18, Jun/18, Jul/18, Q3/18, Q4/18)  and delivery periods that are far away (in terms of the number of forthcoming periods). The implied volatility smile of a selection of the delivery periods is plotted in Figure \ref{fig:sel}, both for the 1-factor and 2-factor model.  For comparison, we have added the calibrated volatility  resulting from Black's model, whose values are reported in Table \ref{tab:parameters_BS}. While the selection in Figure \ref{fig:sel} includes the two forthcoming months,  the plots of all smiles that we consider can be found in the Appendix \ref{app:figures} together with the corresponding option prices, see Figures \ref{fig:1_ivm1} - \ref{fig:3_iv}. Here, we have also displayed the smiling volatility surface of monthly, quarterly and yearly delivery periods (see Figure \ref{fig:surface}). 

\begin{table}[htb]
	\centering
	\begin{small}
		\begin{tabular}{rrrrrrrr}
\hline
			$\sigma(1)$ &  $\sigma(2)$ & $\sigma(3)$ & $\sigma(4)$ & $\sigma(5)$ & $\sigma(6)$ & $\sigma(7)$ \\
		0.0156 & 0.0132 & 0.0123 & 0.0121 & 0.0120 &  0.0137 &  0.0109 \\
\hline
   $\sigma(8)$ &  $\sigma(9)$ &  $\sigma(10)$ &  $\sigma(11)$ & $\sigma(12)$ & $\sigma(13)$ & $\sigma(14)$  \\
		 0.0106  & 0.0106 & 0.0094  & 0.0094 & 0.0106 & 0.0104 & 0.0100 \\
\hline
		\end{tabular}
	\end{small}
	\caption{Calibrated parameters for the purely Gaussian model on 2018, March 5th.}
	\label{tab:parameters_BS}
\end{table}

For the forthcoming contracts, both the 1-factor and 2-factor model underestimate the implied volatility at the money, and overestimate it in and out of the money. Nevertheless, the two factor model does well, and captures especially the shape out of the money. It is our winner here.  On the contrary, the one factor model mismatches so much that it is hard to call it satisfactory.
Consider now the contracts that are not immediately forthcoming. Here we find an almost perfect fit of the two factor model, which is very satisfactory; however, the one factor model seems to perform better than before, too.  Finally, for the contracts with delivery period in the far future, both factor models give good results. Nevertheless, as the smile is quite flat, also the purely Gaussian model seems to be a reasonable choice. 

As we want to have one model  with one single set of parameters, that captures the implied volatility features of all contracts, forthcoming and not, we can conclude that the two factor model performs overall very well and is a dignified winner of this empirical study.


%
%

\section{Conclusion}\label{sec:6}
In this paper, we present the theory of arbitrage possibilities when trading in futures and option contracts with overlapping delivery periods (see also \cite{BPV, borger, latini}).  In this setting, we discuss the necessary no-overlapping-arbitrage (NOA) conditions of a NIG-driven two factor model with Samuelson and seasonal factor. Our main purpose is the calibration of this model to power option prices via  Fourier transform methods in an overlapping-arbitrage free way.  This leads to an additional NOA restriction in our optimization problem, see Section \ref{sec:wholemarket}.  Looking at the market implied volatilities at our chosen trading day, we observe a pronounced forward-skewed volatility smile for the forthcoming delivery periods. The smile flattens out  when time to maturity increases. Thus, it is not surprising that our two factor model --- that accounts exactly for these short-term, long-term variations --- is able to fit the smile very well. 
\section*{Acknowledgments}
\noindent 
 T. Vargiolu acknowledges financial support from the research grant BIRD172407- 2017 of the University of Padova "New perspectives in stochastic methods for finance and energy markets".
This work has been initiated, and then revised, while M. Schmeck was visiting the Department of Mathematics of the University of Padova thanks to the fundings provided by the ``ACRI Young Investigator Training Program'' (YITP-EFI2018) and the YITP 2019 connected to the Energy Finance Italia Conference 2019 (UA.MB.D10-Dipartimento di Statistica e Metode Quantitativi, Universita' degli Studi di Milano - Bicocca), and while M. Piccirilli visited the University of Bielefeld.


\appendix

\section{Appendix}

\subsection{Inversion of the Fourier transform: Carr-Madan approach}

We here recall the Carr-Madan approach for recovering the option value from its Fourier transform (see Section \ref{sec:3}). If $\xi^{MT}_{t,T}$, defined in \eqref{xi_mt}, is integrable, by the Fourier inversion theorem, the modified time value of the option can be recovered by inverting again the last equation:
\begin{equation}\label{z_inv_mt}
z^{MT}_{t,T}(K)=\frac1{2\pi} \int_{-\infty}^{+\infty} e^{-iKv}\,\xi^{MT}_{t,T}(v)\,dv.
\end{equation}
Now, observe that, by the martingale property of $F(\cdot)$,
$$
(F(t)-K)\,\mathds{1}_{F(t)>K}=\bE\left[ (F(T)-K)\left| \mathcal{F}_t \right. \right]\,\mathds{1}_{F(t)>K}.
$$
Then, we can write from \eqref{density}
$$
z^{MT}_{t,T}(K)=\int_{-\infty}^{+\infty}(s-K)\,q_{t,T}(s)\,(\mathds{1}_{s>K}-\mathds{1}_{s(t)>K})\,ds
$$
where, for any $t\in[0,T]$, $s(t):=F(t)$ and $q_{t,T}$ is the density function of $F(T)$ conditioned up to time $t$. As a consequence, if the interchange of integrals holds (see Proposition \ref{change_int2}), \eqref{xi_mt} can be written as
\begin{align}\nonumber
\xi^{MT}_{t,T}(v)=&\int_{-\infty}^{+\infty} e^{ivK}\,\int_{-\infty}^{+\infty}(s-K)\,q_{t,T}(s)\,(\mathds{1}_{s>K}-\mathds{1}_{s(t)>K})\,ds\,dK\\ \label{change}
&=\int_{-\infty}^{+\infty} q_{t,T}(s)\, \int_{s(t)}^{s} e^{ivK}\,(s-K)\,dK\,ds\\\nonumber
&=\int_{-\infty}^{+\infty} q_{t,T}(s)\left\{ \left. \frac{e^{ivK}s}{iv}\right|_{K=s(t)}^{s}- \int_{s(t)}^{s} K\,e^{ivK}\,dK\right\} ds\\\nonumber
&=\int_{-\infty}^{+\infty} q_{t,T}(s)\left\{ -\frac{e^{ivs(t)}s}{iv}+\frac{e^{ivs(t)}s(t)}{iv}-\frac{e^{ivs}}{v^2}+\frac{e^{ivs(t)}}{v^2} \right\} ds\\\nonumber
&=-\frac1{v^2}\,\int_{-\infty}^{+\infty} q_{t,T}(s)\,e^{ivs} \, ds+\frac{e^{ivs(t)}}{v^2}-\frac{e^{ivs(t)}}{iv}\,\int_{-\infty}^{+\infty} q_{t,T}(s) (s-s(t)) \, ds.
\end{align}
Since $F$ is a martingale,
$$
\int_{-\infty}^{+\infty} q_{t,T}(s) (s-s(t)) \, ds = \bE\left[ (F(T)-F(t))\left| \mathcal{F}_t \right. \right]=0
$$
and, by definition of characteristic function (see also \eqref{cumulants}),
$$
\int_{-\infty}^{+\infty} q_{t,T}(s)\,e^{ivs} \, ds = \bE\left[e^{ivF(T)}\left| \mathcal{F}_t \right. \right] = e^{ivs(t)} {\Psi(t,T,v)},
$$
where $\Psi(t,T,v)$ is the characteristic function of $Z(t,T)$. 
Then, we finally arrive to
\begin{equation}\label{xi_analytic_mt}
\xi^{MT}_{t,T}(v)=e^{ivF(t)}\frac{1-{\Psi(t,T,v)}}{v^2}.
\end{equation}
In analogy to the modified option approach, we have that $\xi^{MT}_{t,T}$ has even real part and odd imaginary part and so 
\begin{equation}\label{z_inv_re2}
z^{MT}_{t,T}(K)=\frac1{\pi} \int_{0}^{+\infty} \mathrm{Re}\left( e^{-iKv}\,\xi^{MT}_{t,T}(v)\right)\,dv.
\end{equation}
Finally, by recalling \eqref{mod_time}, the option value is computed from \eqref{z_inv_re2} by
$$
C(t;T,K)=z^{MT}_{t,T}(K)+(F(t)-K)_+.
$$

In order to apply this formula, we need to justify the interchange of integrals operated in \eqref{change} under the following integrability assumptions on the futures price process.
\begin{prop}\label{change_int2}
If $Z(T,T_1,T_2)$ is square-integrable, then 
\begin{align*}
\xi_{t,T}(v)=&\int_{-\infty}^{+\infty} e^{ivK}\,\int_{-\infty}^{+\infty}(s-K)\,q_{t,T}(s)\,(\mathds{1}_{s>K}-\mathds{1}_{s(t)>K})\,ds\,dK\\
&=\int_{-\infty}^{+\infty} q_{t,T}(s)\, \int_{s(t)}^{s} e^{ivK}\,(s-K)\,dK\,ds.
\end{align*}
\end{prop}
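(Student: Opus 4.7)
The plan is to apply Fubini's theorem, so the key task is to check that the integrand is absolutely integrable on $\bR^2$. The crucial observation is that the factor $(s-K)(\mathds{1}_{s>K}-\mathds{1}_{s(t)>K})$ has constant sign: if $s>s(t)$, the indicator difference equals $1$ precisely when $K\in(s(t),s]$, on which $s-K\geq 0$; if $s<s(t)$, it equals $-1$ precisely when $K\in[s,s(t))$, on which $s-K\leq 0$. In both cases,
$$(s-K)\bigl(\mathds{1}_{s>K}-\mathds{1}_{s(t)>K}\bigr) = |s-K|\,\mathds{1}_{[s(t)\wedge s,\,s(t)\vee s]}(K)\geq 0,$$
and since $|e^{ivK}|=1$, the modulus of the full integrand equals $q_{t,T}(s)\,|s-K|\,\mathds{1}_{[s(t)\wedge s,\,s(t)\vee s]}(K)$.

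I would then invoke Tonelli on this non-negative integrand and perform the $K$-integration first, obtaining
$$\int_{-\infty}^{+\infty} |s-K|\,\mathds{1}_{[s(t)\wedge s,\,s(t)\vee s]}(K)\,dK = \frac{(s-s(t))^2}{2}.$$
Integrating then in $s$ gives $\tfrac12\,\bE\bigl[(F(T)-F(t))^2\,\bigl|\,\mathcal{F}_t\bigr] = \tfrac12\,\bE\bigl[Z(t,T,T_1,T_2)^2\,\bigl|\,\mathcal{F}_t\bigr]$, which is finite by the square-integrability hypothesis on $Z$. This bounds the $L^1(\bR^2)$-norm of the integrand and legitimates the use of Fubini's theorem.

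With absolute integrability in hand, Fubini's theorem exchanges the two integrals. After the swap, the factor $\mathds{1}_{s>K}-\mathds{1}_{s(t)>K}$ collapses the $K$-domain to the interval between $s(t)$ and $s$, with a sign that exactly matches the sign-convention of $\int_{s(t)}^{s}$ when $s<s(t)$ (the limits are swapped, which compensates the $-1$ from the indicator difference), producing the right-hand side in the claimed form. The only delicate step is the careful bookkeeping of the indicator algebra across the two sign regimes; modulo that, the argument is a routine Tonelli--Fubini application relying solely on the $L^2$ hypothesis.
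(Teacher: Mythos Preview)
Your proof is correct and follows essentially the same approach as the paper: both verify the Fubini hypothesis by computing the inner $K$-integral of the absolute value as $\tfrac12(s-s(t))^2$ and then invoke square-integrability of $Z$. Your version is in fact more explicit about the sign analysis of the indicator difference and the Tonelli step, which the paper leaves implicit.
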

\begin{proof}
It is enough to show that the integral of the absolute value of the integrand with respect to $K$, i.e.  
$$
\int_{-\infty}^{+\infty} |s-K|\,|\mathds{1}_{s>K}-\mathds{1}_{s(t)>K}|\,dK
$$
can be integrated in $s$ against the density $q_{t,T}(s)$. Since this is equal to
\begin{align*}
\int_{s(t)}^{s} (s-K)\,dK &= \frac{1}{2}(s-s(t))^2,
\end{align*}
the integrability with respect to the density is equivalent to the existence of the second moment of $Z(t,T,T_1,T_2)$.
\end{proof}

\subsection{Characteristic function of the two-factor NIG model}

As already stated in Section 4.2 for general multifactor additive models, the characteristic function of two-factor NIG model is defined as a function of $v\in\bR$ by
$$
\Psi(t,T,T_1,T_2,v) = \bE\left[ e^{ivZ(t,T,T_1,T_2)} \left|\mathcal{F}_t\right.\right],
$$
where  $Z(t,T,T_1,T_2) := F(T,T_1,T_2)-F(t,T_1,T_2)$, and it can be shown to be equal to
\begin{align}\label{cum_app}
\log\Psi(t,T,T_1,T_2,v) &= \psi_1(t,T;y\Gamma_1(\cdot,T_1,T_2)) +\psi_2(t,T;y\Gamma_2(T_1,T_2)).
\end{align}
 
First, we need to compute the cumulant function $\widetilde\psi_j(\theta)$ of $J_j$ $(j=1,2)$, that can be computed from the corresponding cumulant $\widetilde\psi_{L_j}(\theta)$ of $L_j$ as follows. From the definition of $J_j$ we know that
$$
\widetilde\psi_j(\theta)=\widetilde\psi_{L_j}(\theta)-i\theta \left(\chi_j+\int_{|y|\geq 1} y\,\nu_j(dy)\right),
$$
where 
$$
\widetilde\psi_{L_j}(\theta)=\delta\left(\sqrt{\alpha_j^2-\beta_j^2}-\sqrt{\alpha_j^2-(\beta_j+i\theta)^2}\right) + i\theta m_j.
$$
Moreover, since $L_j(1)$ is a NIG distributed random variable, its expected value is
$$
\bE[L_j(1)]=\left(\chi_j+\int_{|y|\geq 1} y\,\nu_j(dy)\right)=m_j+\frac{\delta_j\beta_j}{\sqrt{\alpha_j^2-\beta_j^2}},
$$
so that, by replacing it in the expression above, we find that
\begin{equation}\label{ch_fun_center}
\widetilde\psi_j(\theta)=\delta_j\,\left(\sqrt{\alpha_j^2-\beta_j^2}-\sqrt{\alpha_j^2-(\beta_j+i\theta)^2}  -i\theta\,\frac{\beta_j}{\sqrt{\alpha_j^2-\beta_j^2}}\right).
\end{equation}
In particular, we observe that $J_j(1)$ is a NIG distributed random variable with parameters $(\alpha_j,\beta_j,\delta_j,-\frac{\delta_j\beta_j}{\sqrt{\alpha_j^2-\beta_j^2}})$. Now, we can compute the two jump components of the cumulant function in \eqref{cum_app}, 
by inserting the corresponding expressions of $\widetilde\psi_j(\theta)$ from \eqref{ch_fun_center} and $\Gamma_j$ as in \eqref{gamma1_3}--\eqref{gamma2_3} for $j=1,2$. The second coefficient can be directly computed as 
\begin{align}\nonumber
\psi_2(t,T;y\Gamma_2(T_1,T_2))&=(T-t) \,\delta_2 \Bigl(\sqrt{\alpha_2^2-\beta_2^2}-\sqrt{\alpha_2^2-(\beta_2+i y\Gamma_2(T_1,T_2))^2}\\
\label{psi2}
&-i y\Gamma_2(T_1,T_2)\frac{\beta_2}{\sqrt{\alpha_2^2-\beta_2^2}}\Bigr),
\end{align}
while the first requires an integration in time:
\begin{align}\nonumber
\psi_1(t,T;y\Gamma_1(\cdot,T_1,T_2))& = (T-t) \,\delta_1\,\sqrt{\alpha_1^2-\beta_1^2}\\  \nonumber 
&-\delta_1\,\int_t^T\sqrt{\alpha_1^2-(\beta_1+iy\Gamma_1(u,T_1,T_2))^2}\,du\\
\label{psi1_raw}
&-iy\frac{\delta_1\beta_1}{\sqrt{\alpha_1^2-\beta_1^2}} \int_t^T\Gamma_1(u,T_1,T_2)\,du.
\end{align}
Let us recall that, if $\zeta := \Gamma_2(T_1,T_2)$ is positive, by the properties of the NIG distribution (see e.g. \cite{BN}), $\zeta\cdot J_2(1)$ is a NIG distributed random variable with parameters $(\alpha_2/\zeta,\beta_2/\zeta,\delta_2\zeta,-\frac{\delta_2\zeta\beta_2}{\sqrt{\alpha_2^2-\beta_2^2}})$. Consequently, it is easy to see that we can assume without loss of generality that $\delta_2=1$, so that 
\begin{align}\nonumber
\psi_2(t,T;y\Gamma_2(T_1,T_2))&=(T-t) \,\Bigl(\sqrt{\alpha_2^2-\beta_2^2}-\sqrt{\alpha_2^2-(\beta_2+i y\Gamma_2(T_1,T_2))^2}\\
\label{}
&-i y\Gamma_2(T_1,T_2)\frac{\beta_2}{\sqrt{\alpha_2^2-\beta_2^2}}\Bigr),
\end{align}
For the same reason we can assume that $\delta_1=1$. By replacing $\Gamma_1(u,T_1,T_2) = e^{\mu u}\,\frac{\gamma_1(e^{-\mu T_1}-e^{-\mu T_2})}{\mu (T_2-T_1)}:=e^{\mu u}\,\widetilde\Gamma_1(T_1,T_2)$ in \eqref{psi1_raw} and integrating, we get
\begin{align}\nonumber
\psi_1(t,T;y\Gamma_1(\cdot,T_1,T_2))&=(T-t)\,\sqrt{\alpha_1^2-\beta_1^2}-\eta(c(y,T))+\eta(c(y,t))\\ 
\label{psi1}
&-\frac{iy\,\widetilde\Gamma_1(T_1,T_2)\,\beta_1 (e^{\mu T}-e^{\mu t})}{\mu\sqrt{\alpha_1^2-\beta_1^2}}, 
\end{align}
where
\begin{align}\nonumber
\eta(w):=&\frac{1}\mu\left( \sqrt{\alpha_1^2+w^2}-i\beta_1\arcsinh{\frac{w}{\alpha_1}} \right.\\
\label{eta}
&\left.-\sqrt{\alpha_1^2-\beta_1^2}\, \log \frac{2\alpha_1^2\left( \alpha_1^2-i\beta_1 w+\sqrt{\alpha_1^2-\beta_1^2}\sqrt{\alpha_1^2+w^2} \right)}
{(w+i\beta_1) (\alpha_1^2-\beta_1^2)^\frac32} \right),\\
\nonumber \\ \label{c3}
c(v,u)&:=v\,\widetilde\Gamma_1(T_1,T_2)\,e^{\mu u}-i\beta_1.
\end{align}
Finally, the cumulant function of $Z$ is explicitly given by
\begin{align*}
\log\Psi(t,T,T_1,T_2,y) &=\psi_1(t,T;y\Gamma_1(\cdot,T_1,T_2)) +\psi_2(t,T;y\Gamma_2(T_1,T_2))\\
&=(T-t) \,\Biggl\{\sqrt{\alpha_1^2-\beta_1^2}+\sqrt{\alpha_2^2-\beta_2^2}\\
&-iy\,\Biggl(\frac{\widetilde\Gamma_1(T_1,T_2)\beta_1}{\mu\sqrt{\alpha_1^2-\beta_1^2}}\,\frac{ (e^{\mu T}-e^{\mu t})}{T-t}
+\frac{\Gamma_2(T_1,T_2)\,\beta_2}{\sqrt{\alpha_2^2-\beta_2^2}}\Biggr)\\
&-\frac{\eta(c(y,T))-\eta(c(y,t))}{T-t}-\sqrt{\alpha_2^2-(\beta_2+i y\Gamma_2(T_1,T_2))^2}\Biggr\}.
\end{align*}
with $\eta(\cdot)$ and $c(\cdot,\cdot)$ as in \eqref{eta}--\eqref{c3}.

\section{Figures}\label{app:figures}

\begin{center}
\begin{figure} 
	
	\begin{subfigure}{0.7\textwidth}
		\includegraphics[width=\linewidth]{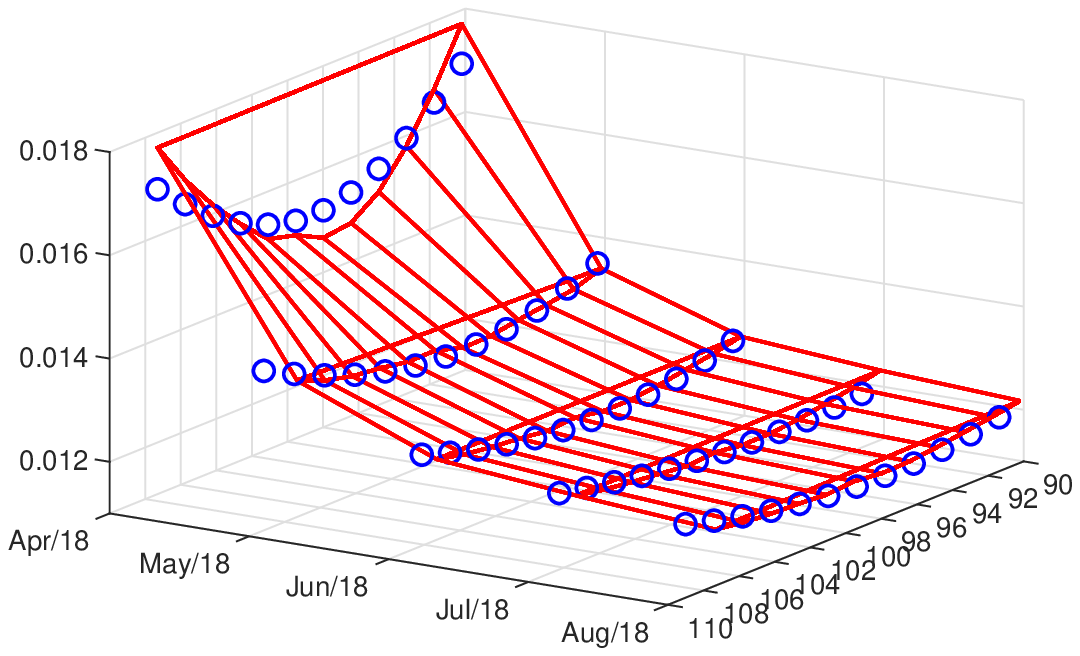}
	\end{subfigure}

	\begin{subfigure}{0.7\textwidth}
		\includegraphics[width=\linewidth]{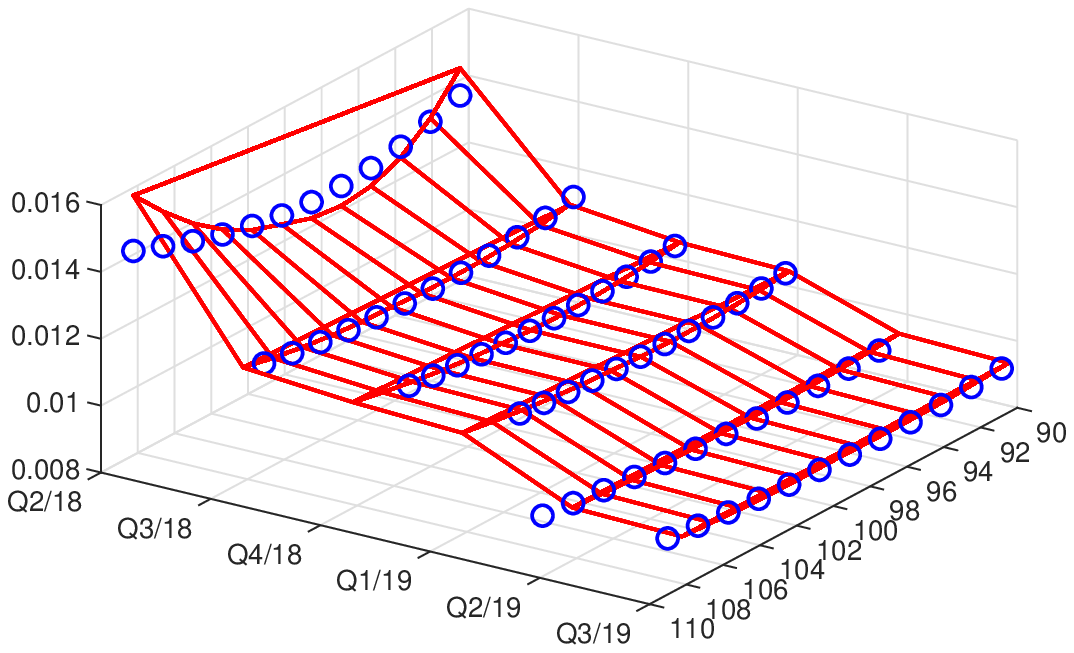}
	\end{subfigure}\hspace*{\fill}

	\begin{subfigure}{0.7\textwidth}
		\includegraphics[width=\linewidth]{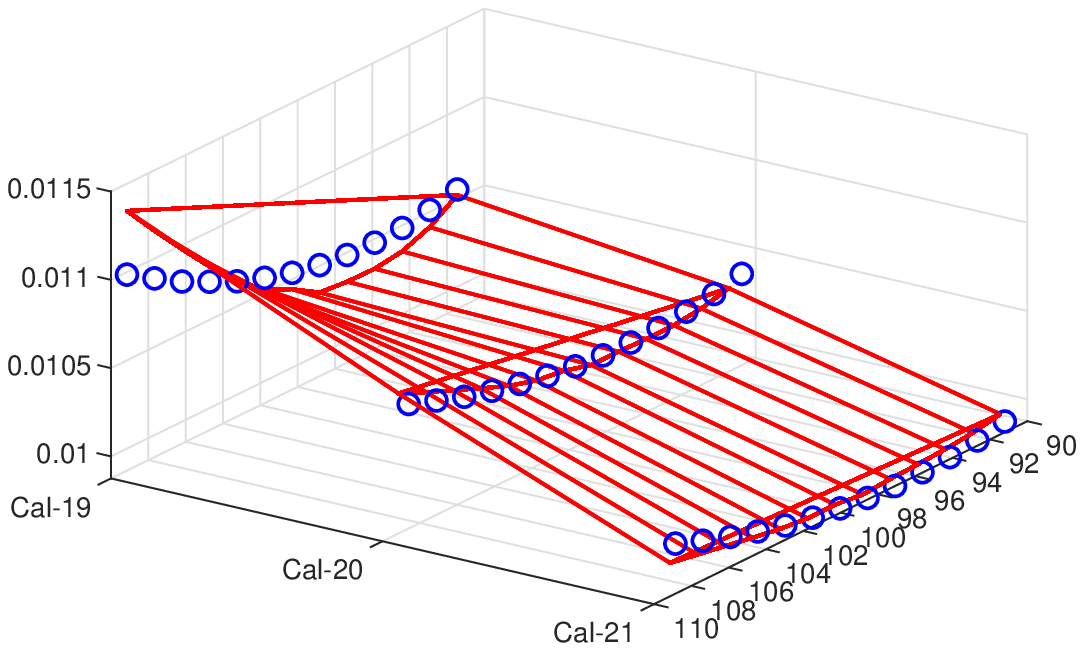}
	\end{subfigure}\hspace*{\fill}

	\caption{Surface plot of implied volatilites for montly, quarterly and yearly delivery periods. Empirical values (dotted, blue) and two factor model (red).} \label{fig:surface}
\end{figure}
\end{center}

\begin{figure} 
	\begin{subfigure}{0.5\textwidth}
		\includegraphics[width=\linewidth]{apr18}
	\end{subfigure}\hspace*{\fill}
	\begin{subfigure}{0.5\textwidth}
		\includegraphics[width=\linewidth]{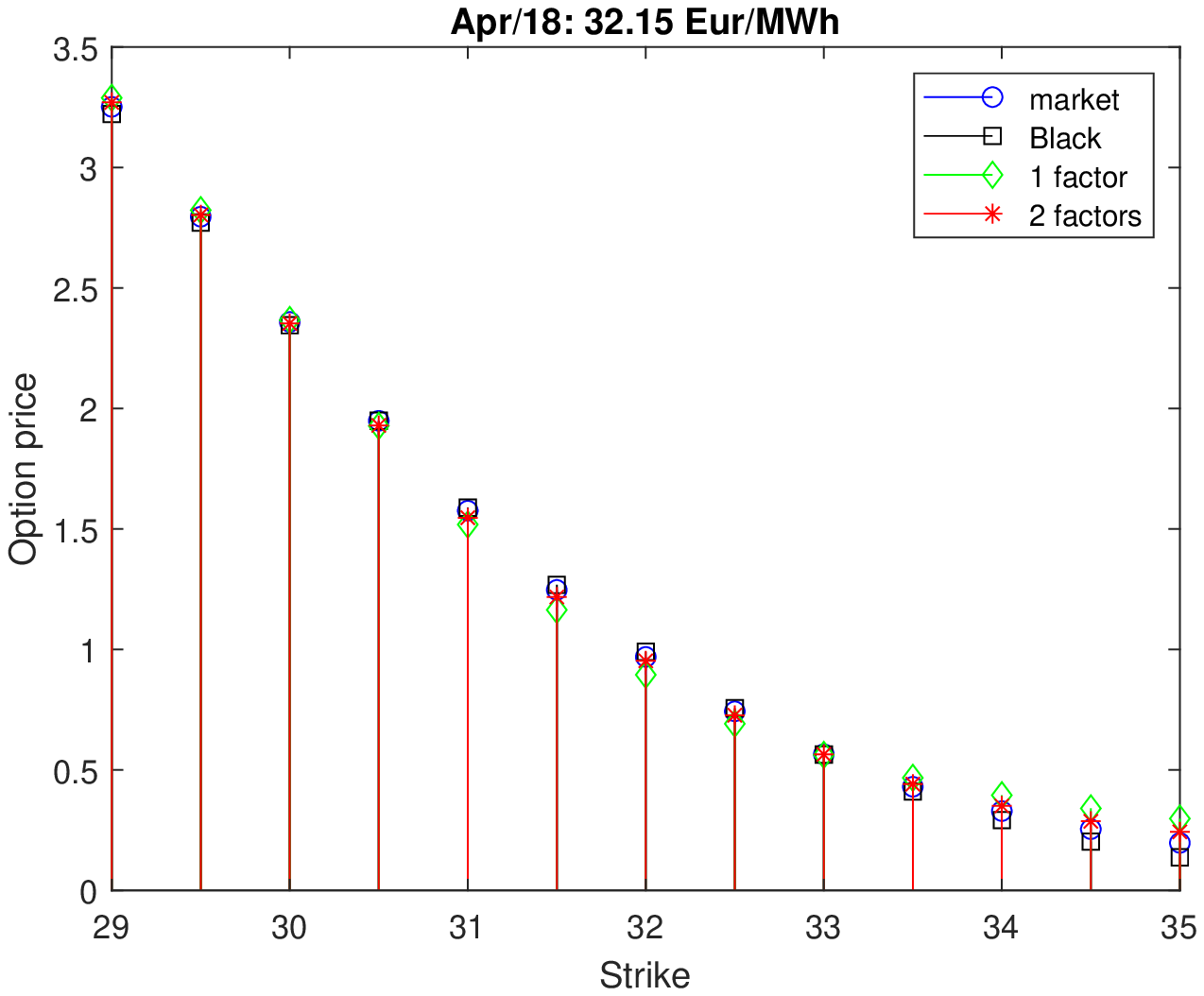}
	\end{subfigure}
	
	\vspace{30pt}
	\medskip
	\begin{subfigure}{0.5\textwidth}
		\includegraphics[width=\linewidth]{may18}
	\end{subfigure}\hspace*{\fill}
	\begin{subfigure}{0.5\textwidth}
		\includegraphics[width=\linewidth]{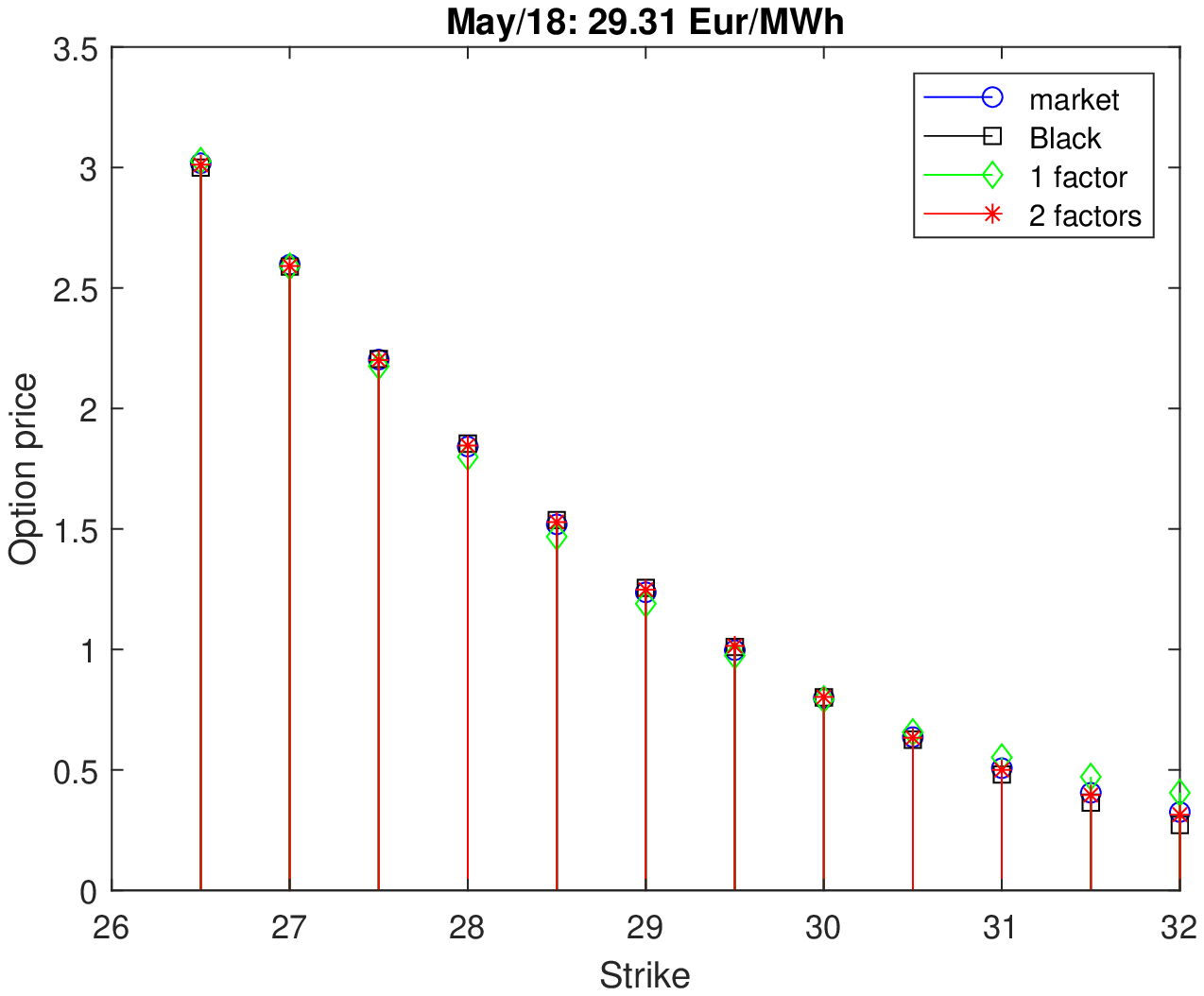}
	\end{subfigure}
	
	\vspace{30pt}
	\medskip
	
	\begin{subfigure}{0.5\textwidth}
		\includegraphics[width=\linewidth]{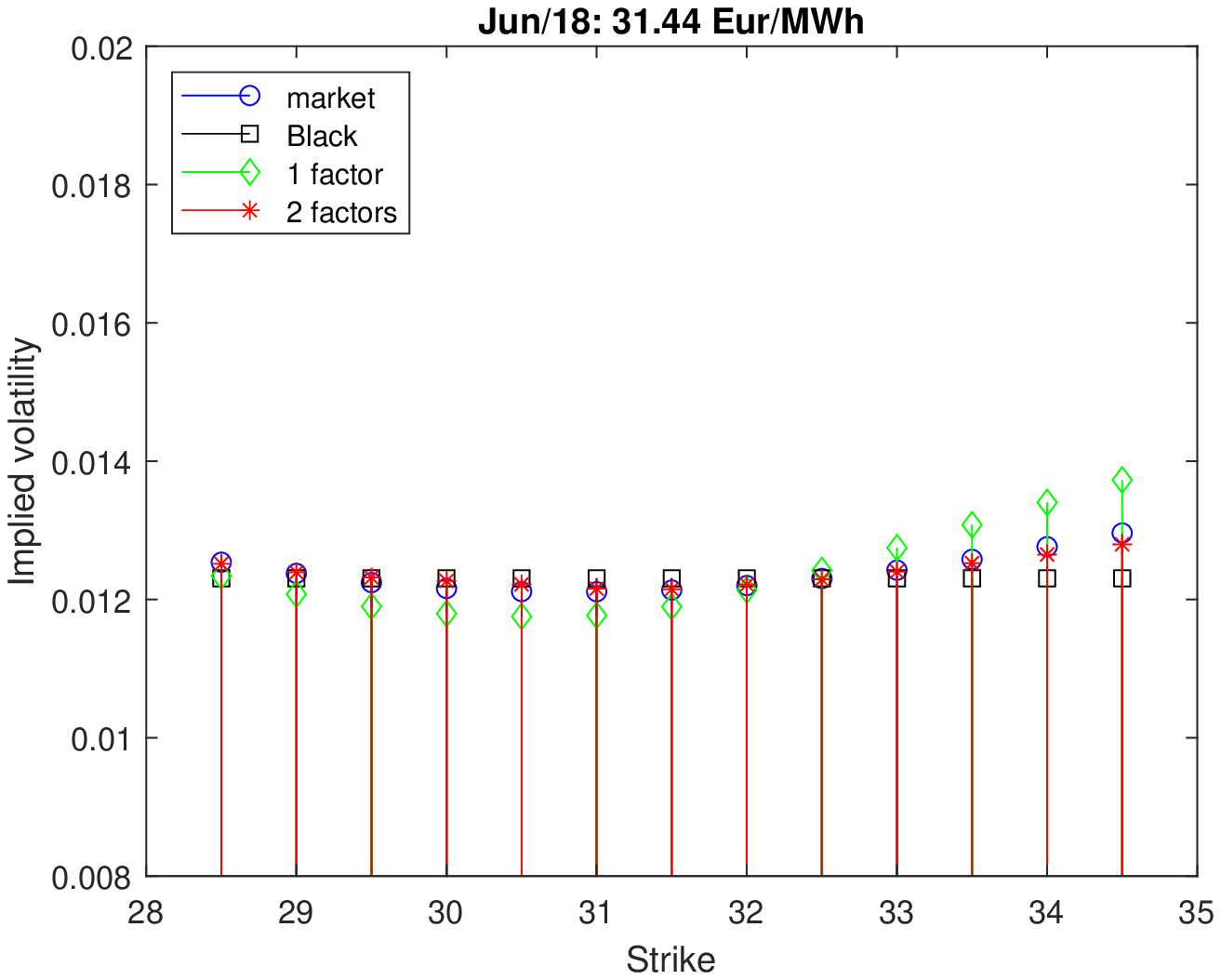}
	\end{subfigure}\hspace*{\fill}
	\begin{subfigure}{0.5\textwidth}
		\includegraphics[width=\linewidth]{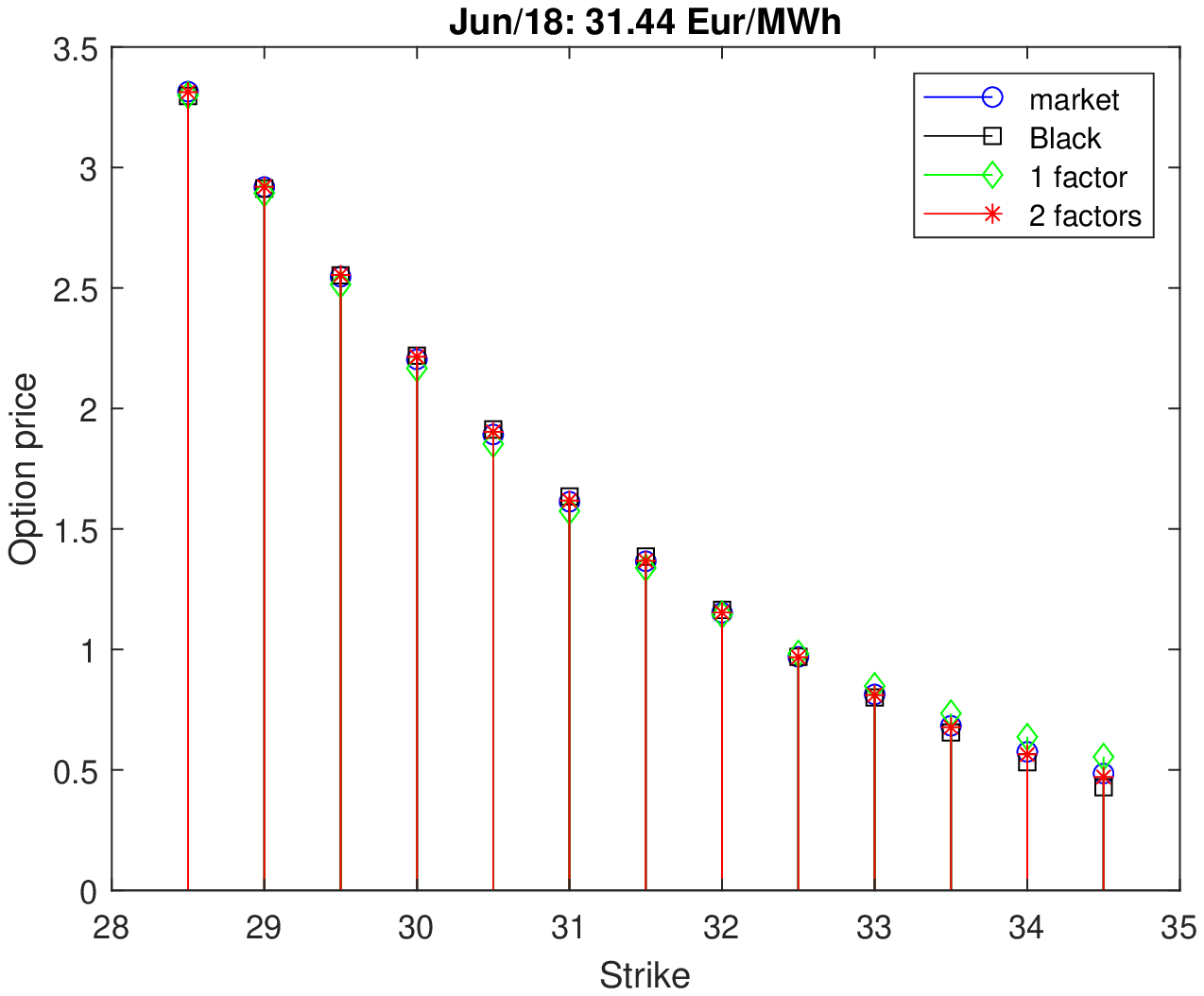}
	\end{subfigure}
	
%
	\vspace{30pt}
	
	
	\caption{Implied volatility for the Black, one-factor and two-factor model compared to the empirical implied volatilities of all the options listed at March 5, 2018 (left). The corresponding underlying current price is indicated above each plot. On the right hand side the corresponding prices are shown. Monthly delivery periods from April 18 - June 18.} \label{fig:1_ivm1}
\end{figure}

\begin{figure} 

	\begin{subfigure}{0.5\textwidth}
		\includegraphics[width=\linewidth]{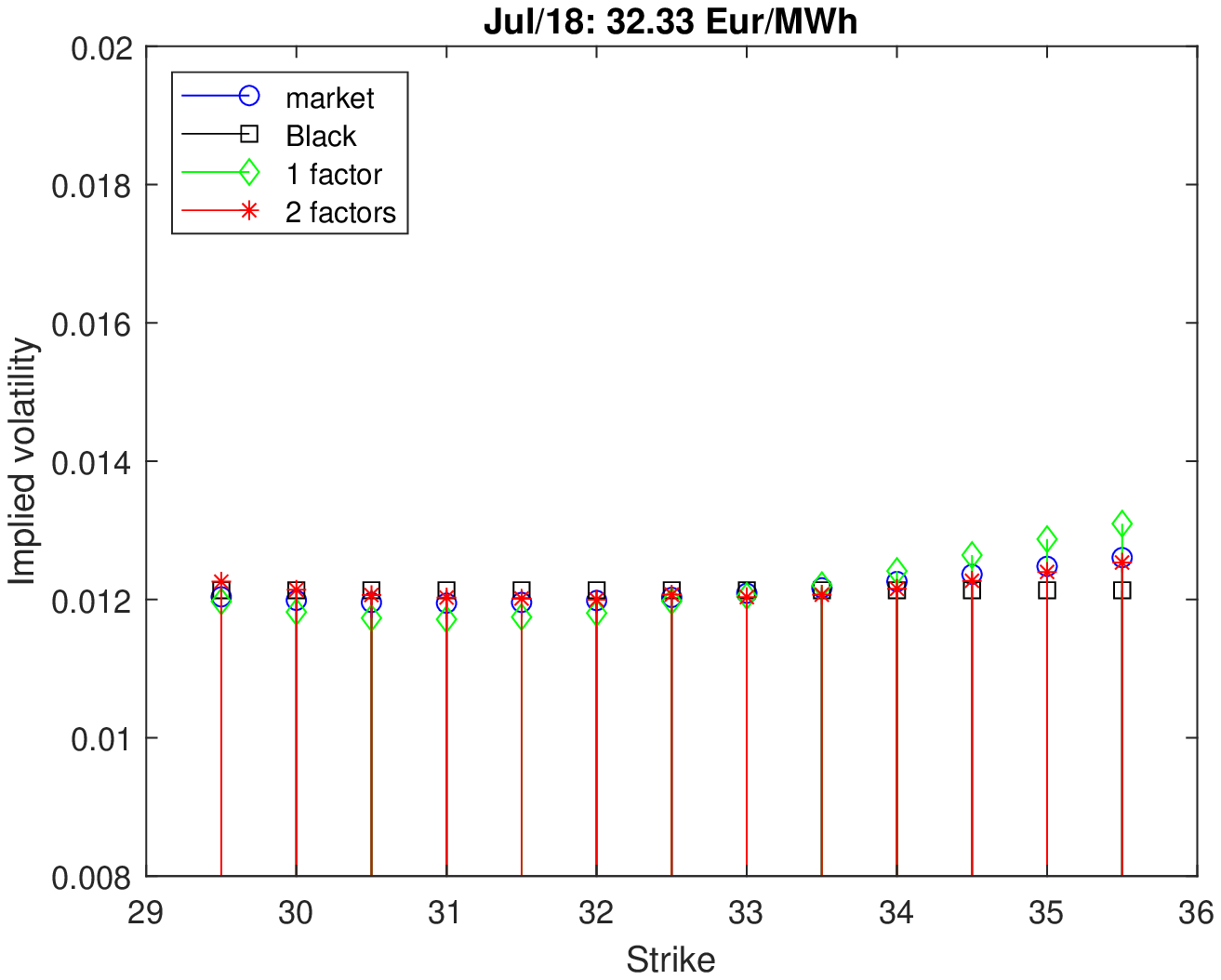}
	\end{subfigure}\hspace*{\fill}
	\begin{subfigure}{0.5\textwidth}
		\includegraphics[width=\linewidth]{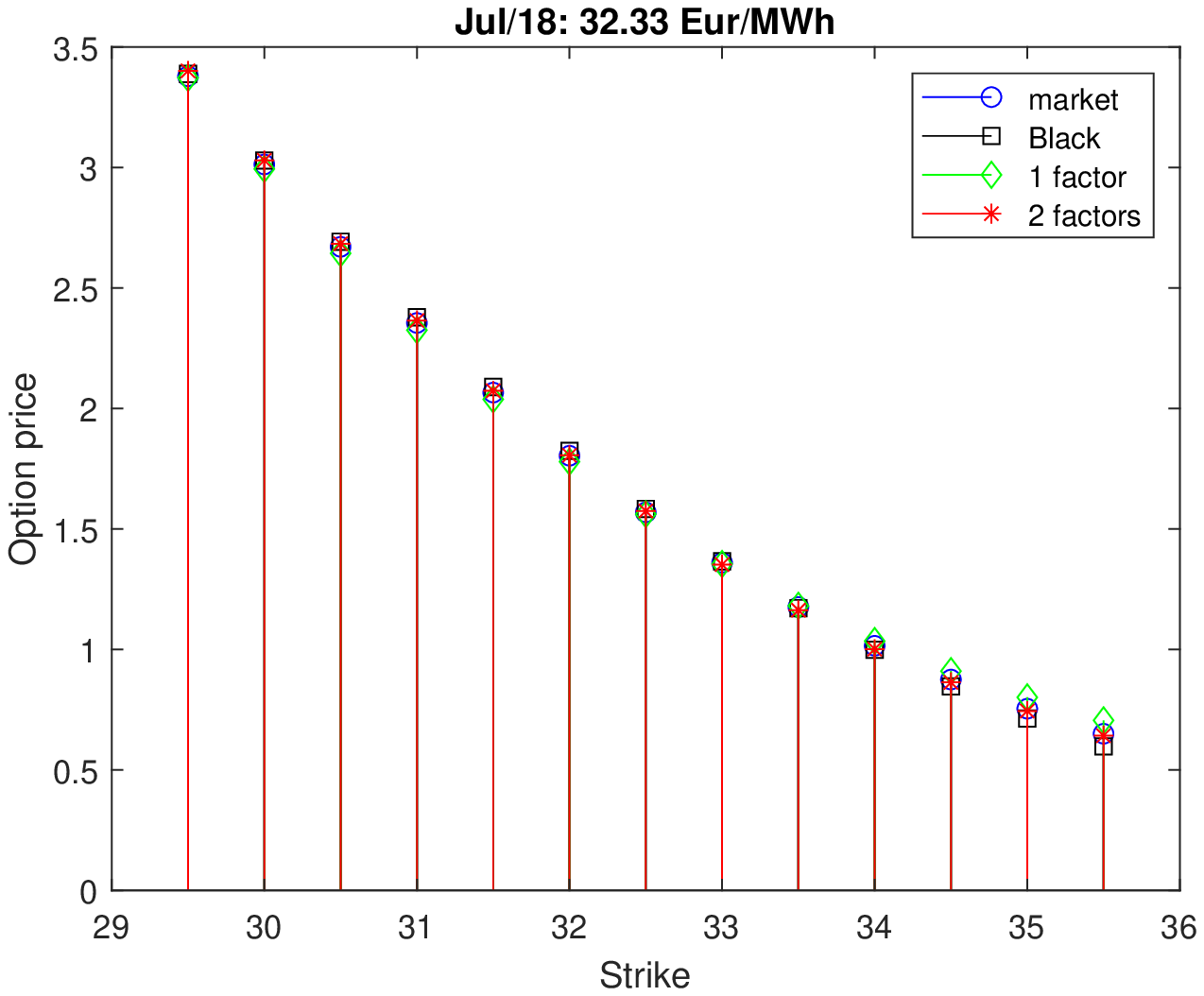}
	\end{subfigure}
	
		\vspace{30pt}
		\medskip

	\begin{subfigure}{0.5\textwidth}
		\includegraphics[width=\linewidth]{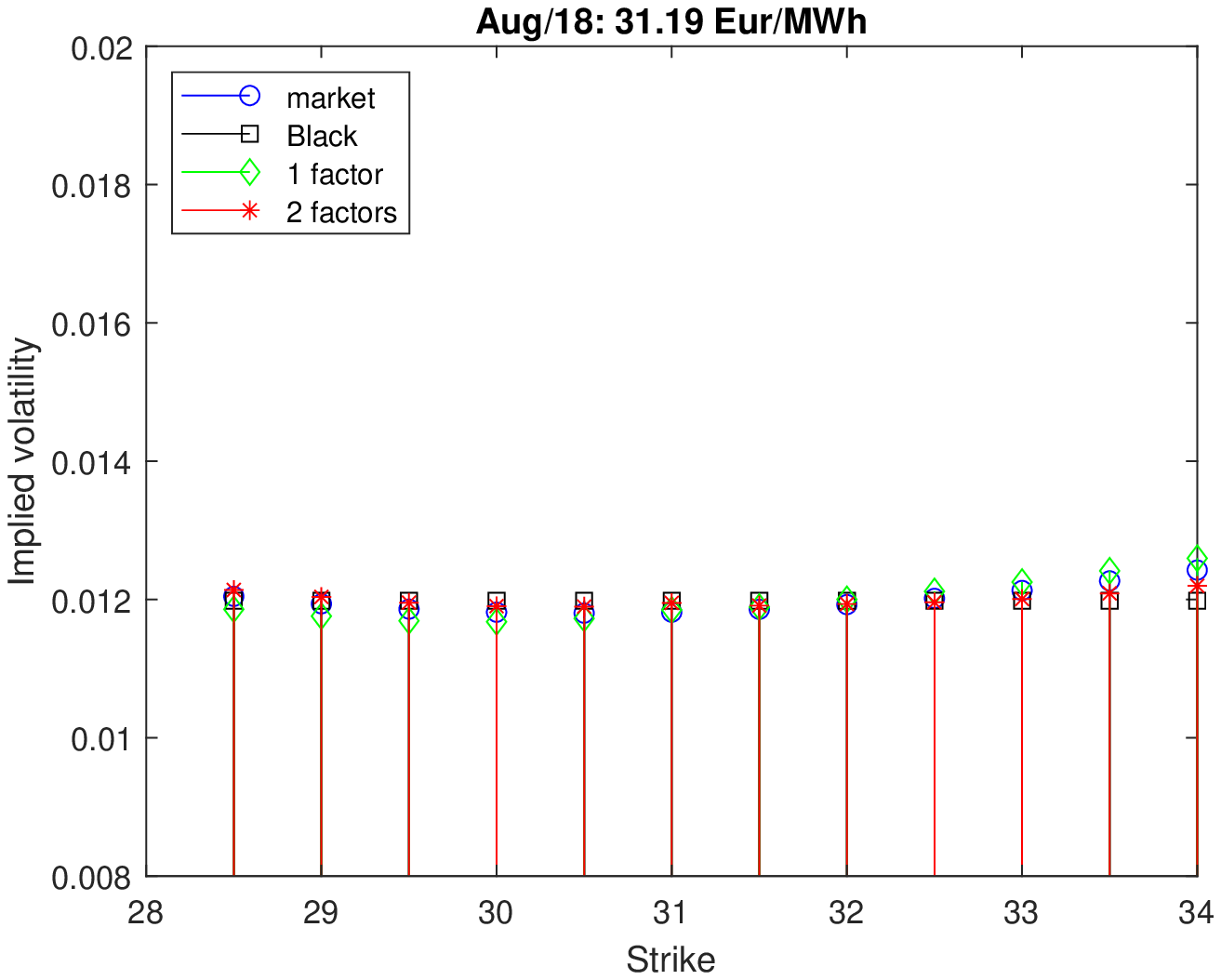}
	\end{subfigure}\hspace*{\fill}
	\begin{subfigure}{0.5\textwidth}
		\includegraphics[width=\linewidth]{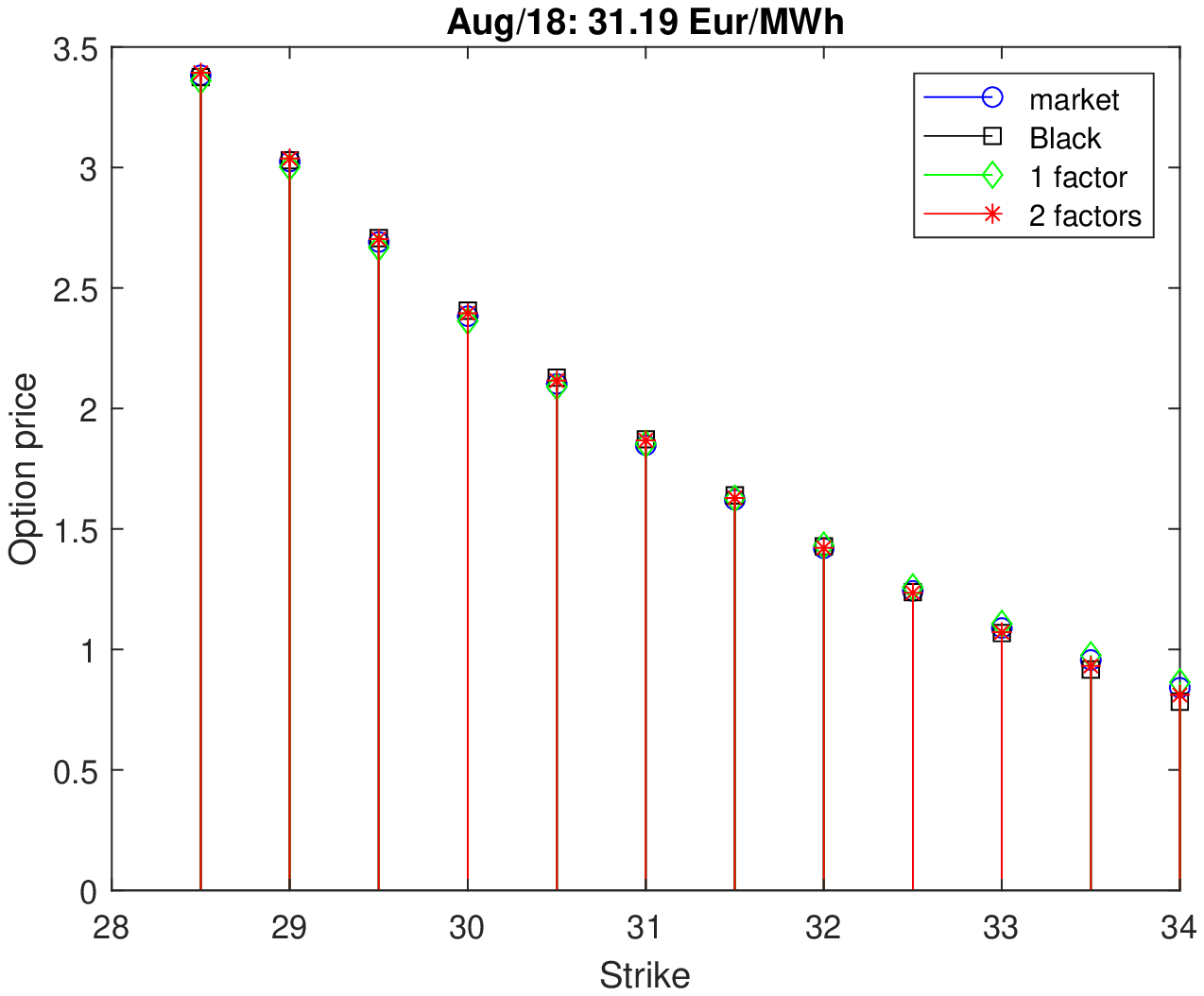}
	\end{subfigure}
	
	
	\vspace{30pt}
	
	\caption{Implied volatility for the Black, one-factor and two-factor model compared to the empirical implied volatilities of all the options listed at March 5, 2018 (left). The corresponding underlying current price is indicated above each plot. On the right hand side the corresponding prices are shown. Monthly delivery periods, July and August 2018.} \label{fig:1_ivm2}
\end{figure}

\begin{figure} 
	\begin{subfigure}{0.5\textwidth}
		\includegraphics[width=\linewidth]{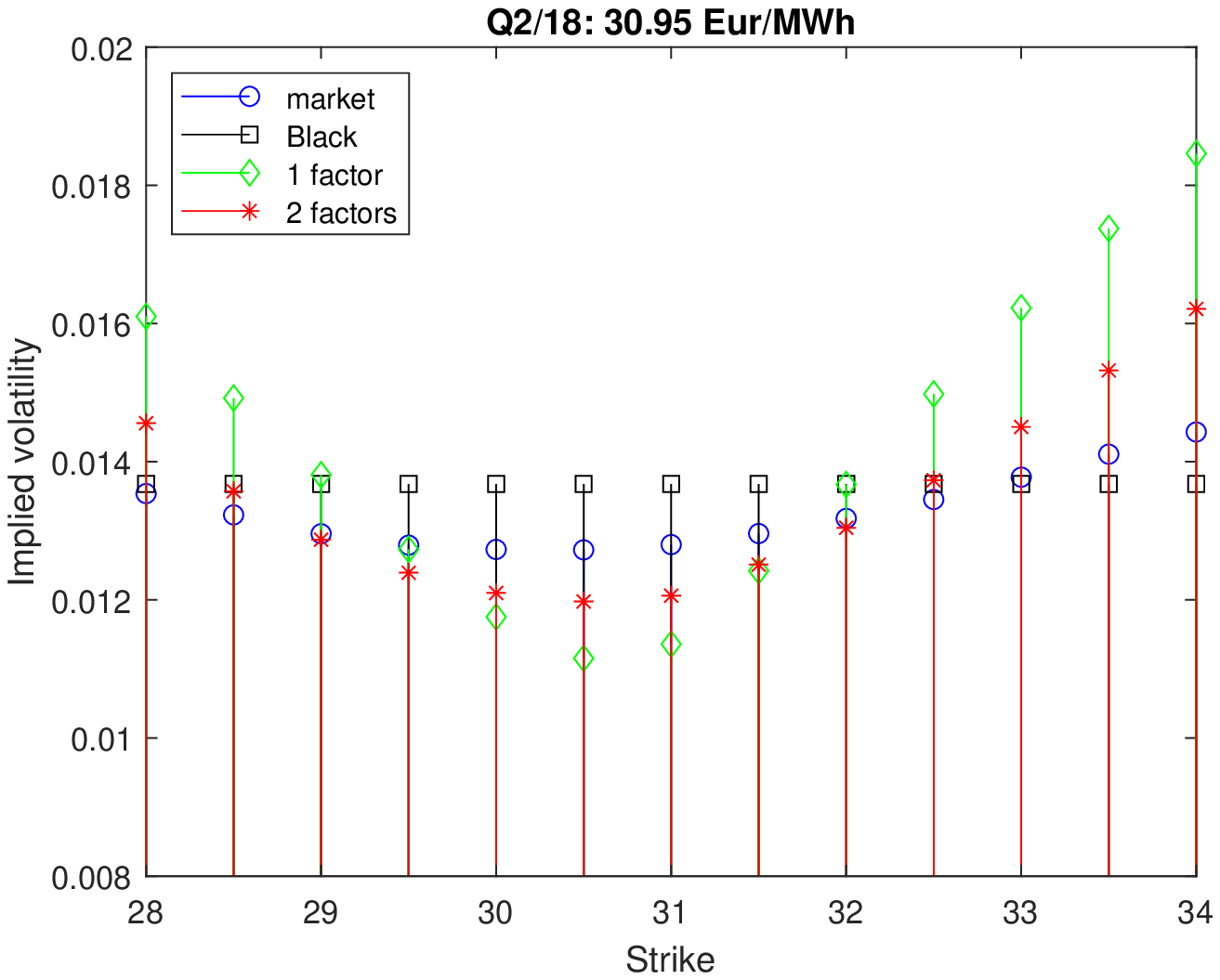}
	\end{subfigure}\hspace*{\fill}
	\begin{subfigure}{0.5\textwidth}
		\includegraphics[width=\linewidth]{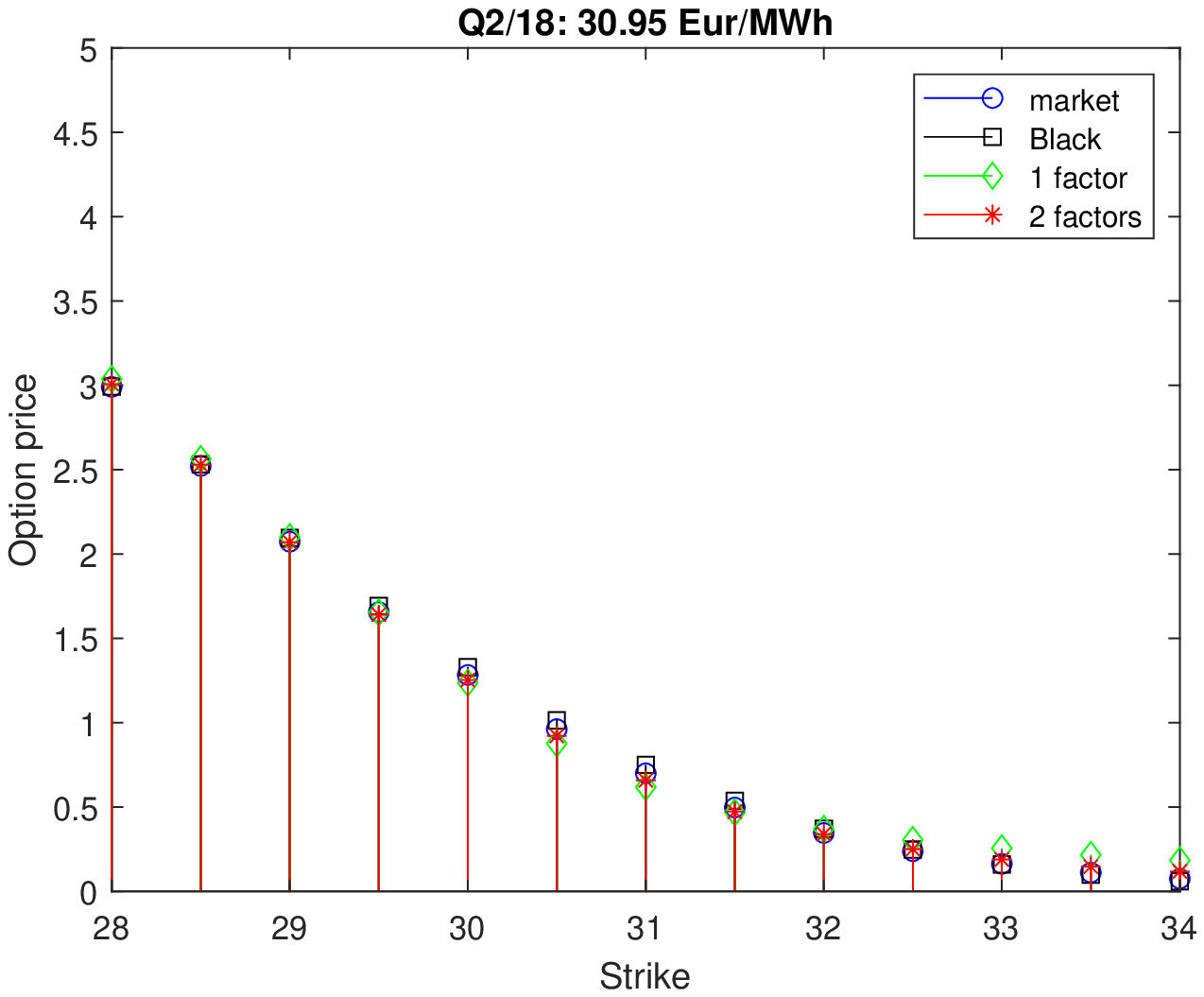}
	\end{subfigure}
	
	\vspace{30pt}
	\medskip
	\begin{subfigure}{0.5\textwidth}
		\includegraphics[width=\linewidth]{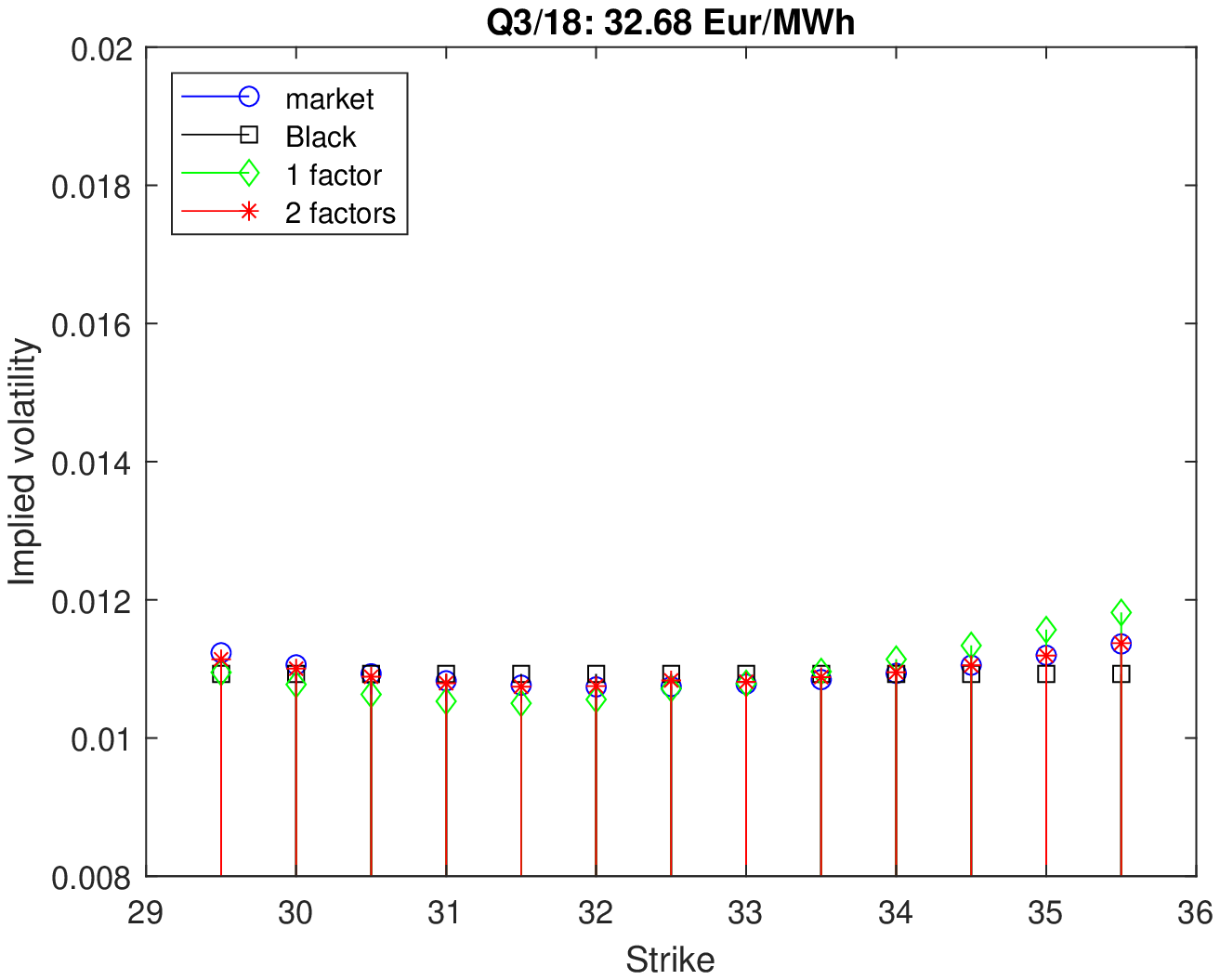}
	\end{subfigure}\hspace*{\fill}
	\begin{subfigure}{0.5\textwidth}
		\includegraphics[width=\linewidth]{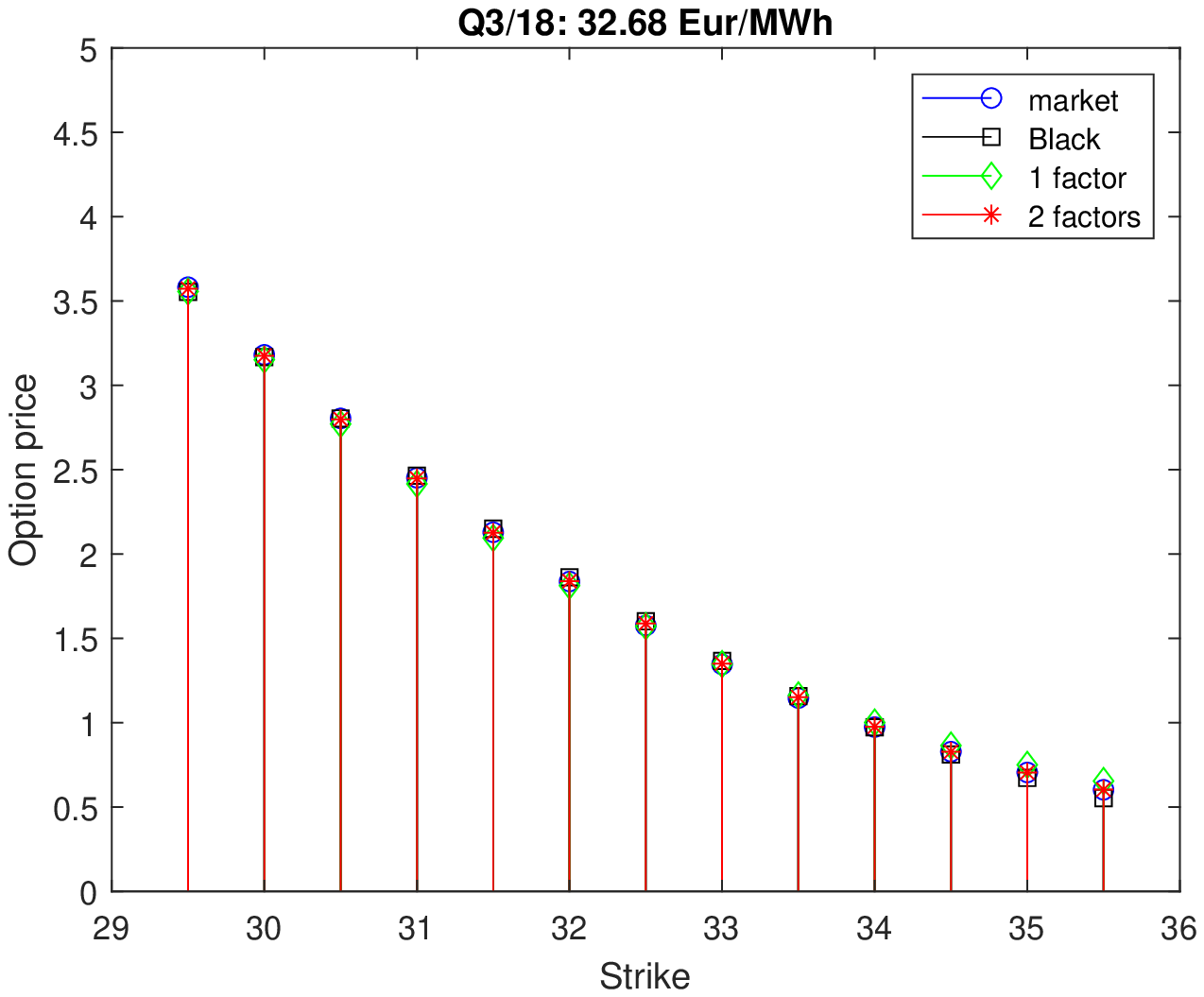}
	\end{subfigure}
	
	\vspace{30pt}
	\medskip
	
	\begin{subfigure}{0.5\textwidth}
		\includegraphics[width=\linewidth]{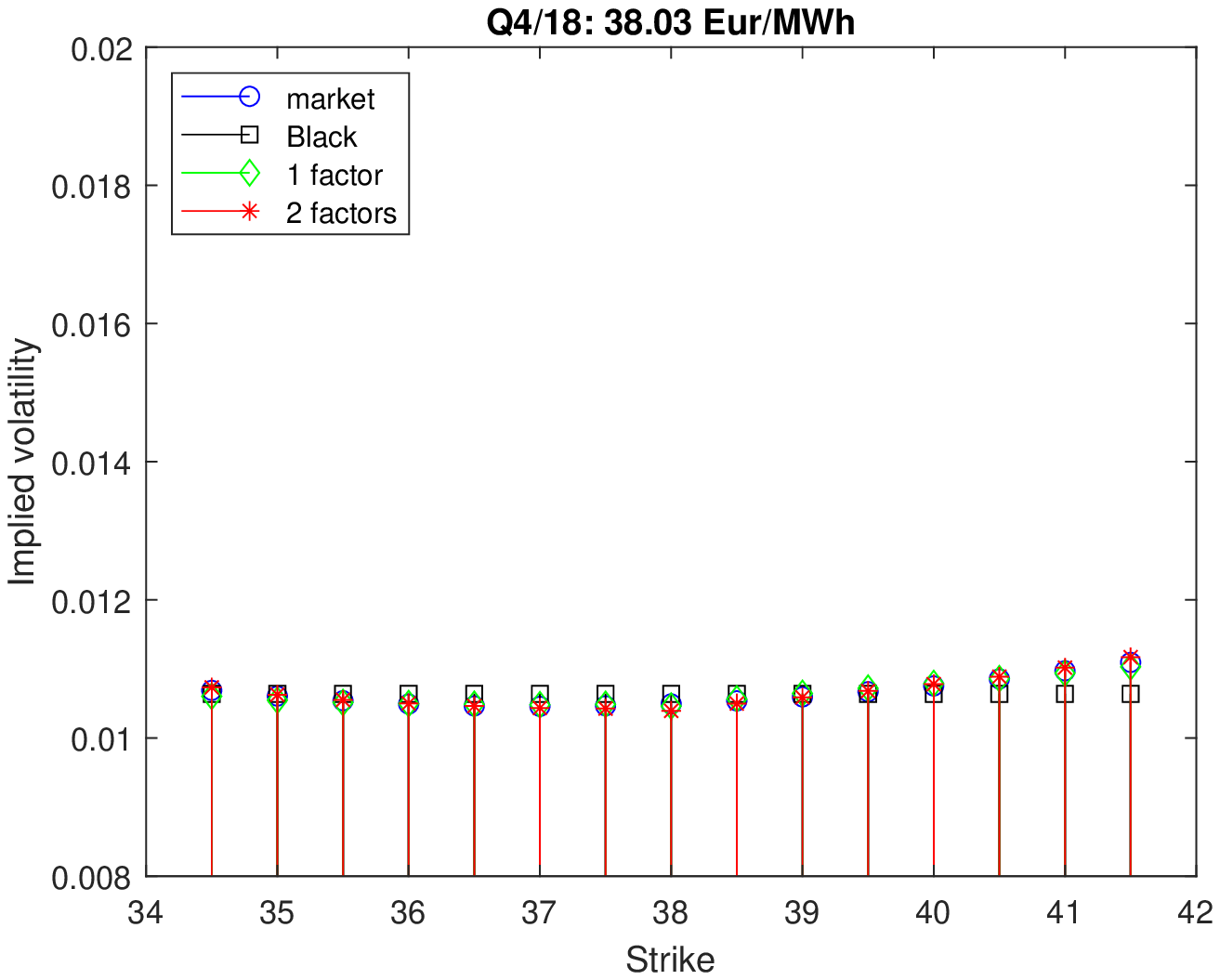}
	\end{subfigure}\hspace*{\fill}
	\begin{subfigure}{0.5\textwidth}
		\includegraphics[width=\linewidth]{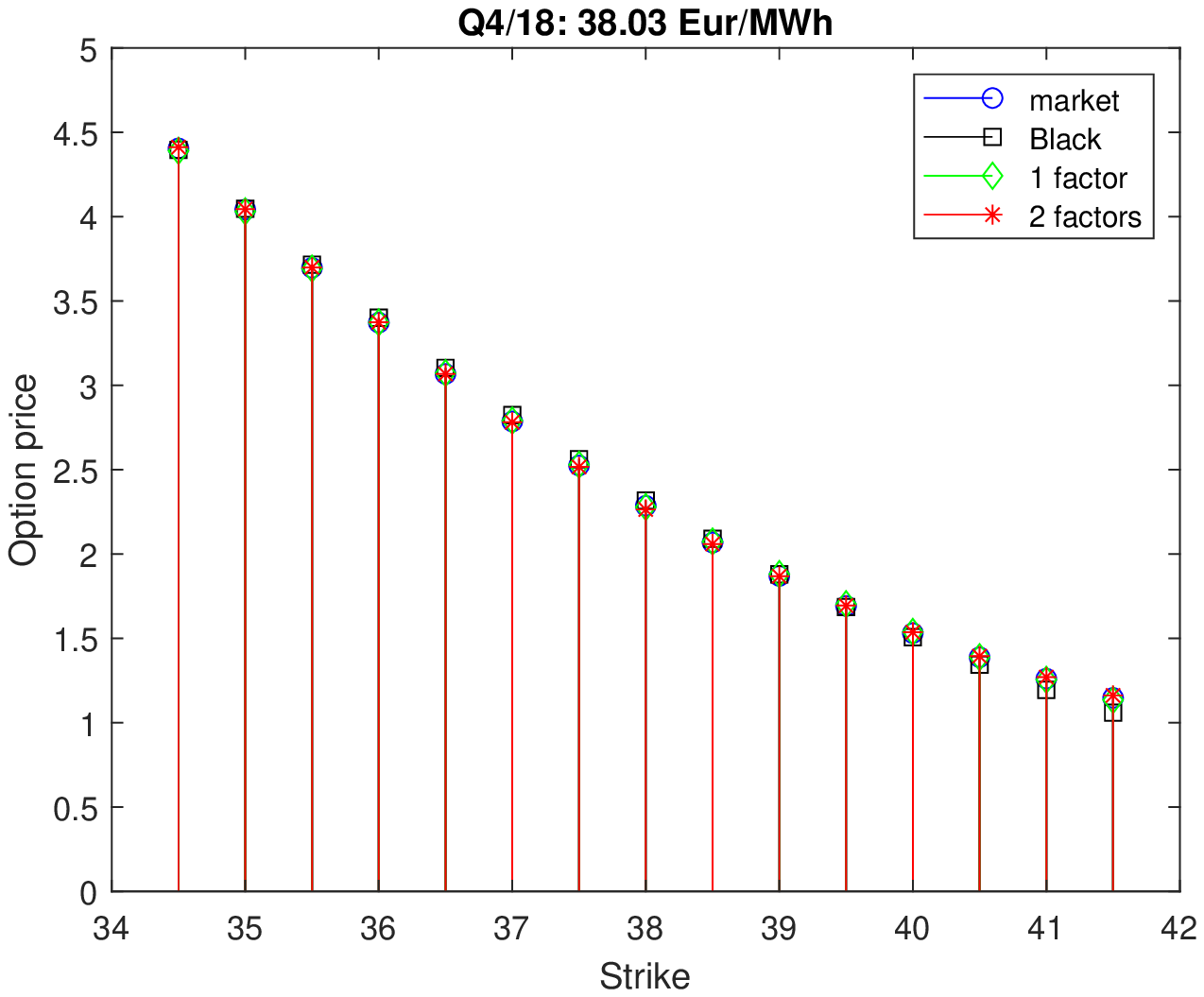}
	\end{subfigure}

	\vspace{30pt}
	
	
	\caption{Implied volatility for the Black, one-factor and two-factor model compared to the empirical implied volatilities of all the options listed at March 5, 2018 (left). The corresponding underlying current price is indicated above each plot. On the right hand side the corresponding prices are shown. Quarterly delivery periods in 2018.} \label{fig:2_ivq1}
\end{figure}

\begin{figure} 

	\vspace{30pt}
	\medskip
	\begin{subfigure}{0.5\textwidth}
		\includegraphics[width=\linewidth]{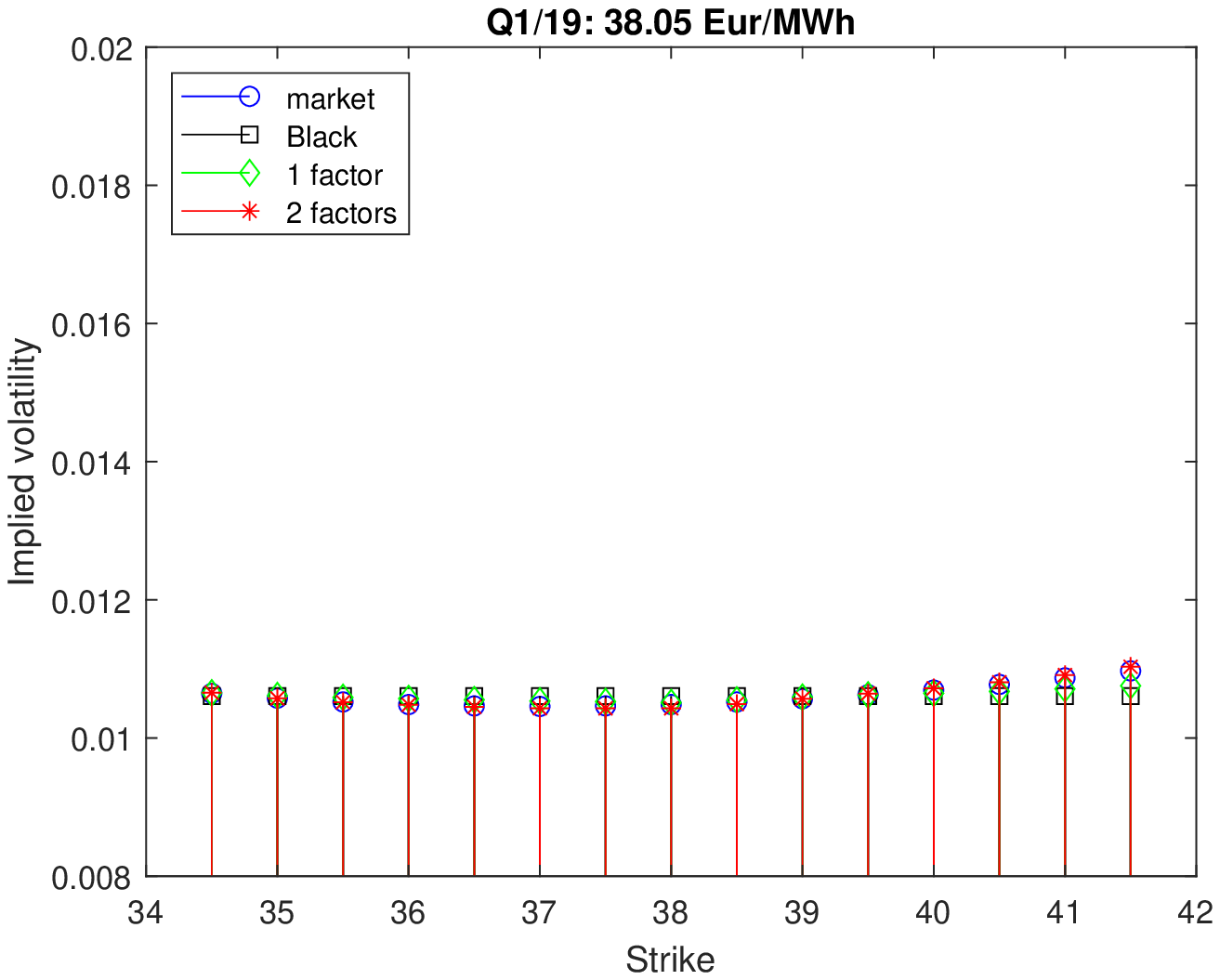}
	\end{subfigure}\hspace*{\fill}
	\begin{subfigure}{0.5\textwidth}
		\includegraphics[width=\linewidth]{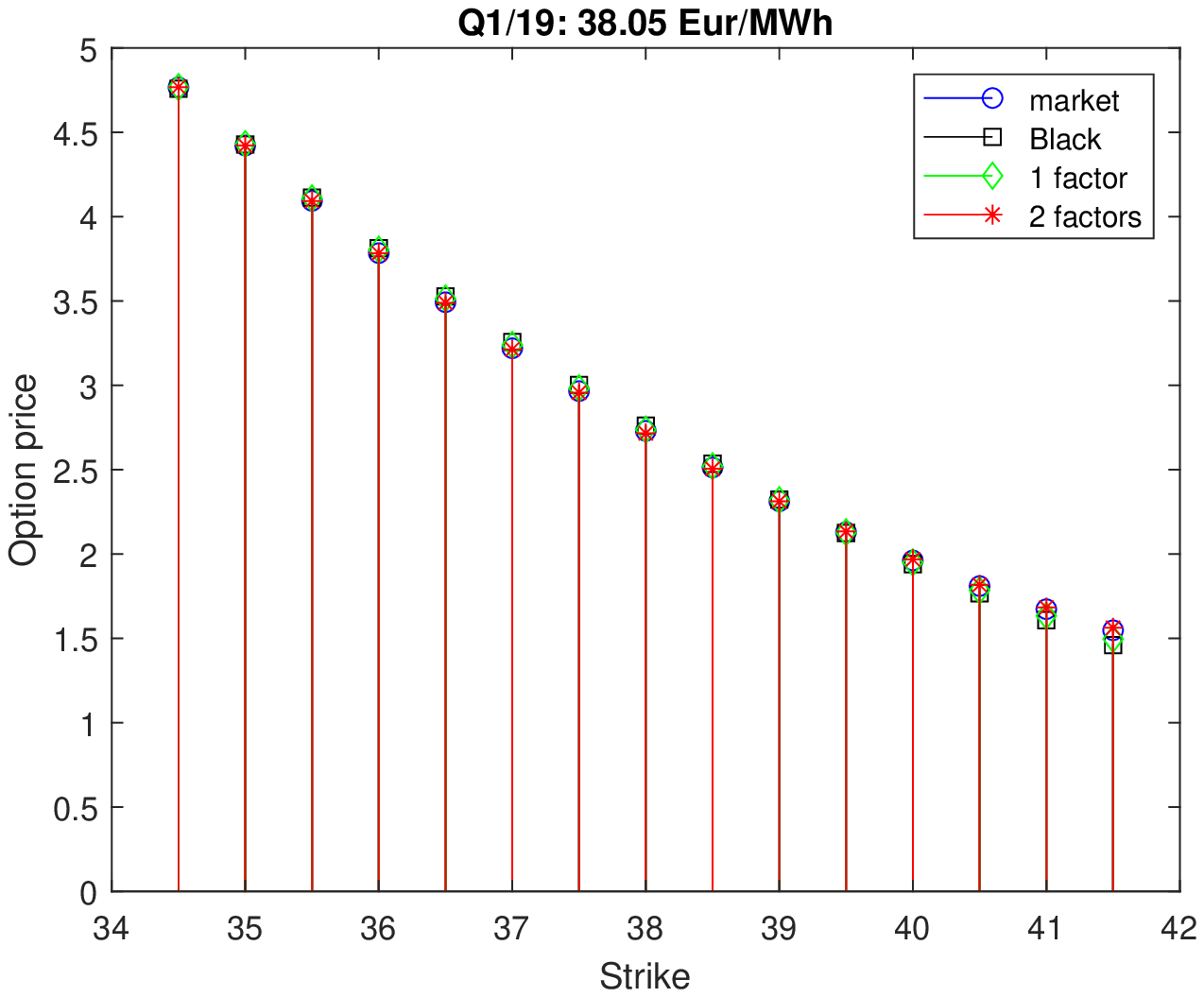}
	\end{subfigure}
	
	\vspace{30pt}
	\medskip

	\begin{subfigure}{0.5\textwidth}
		\includegraphics[width=\linewidth]{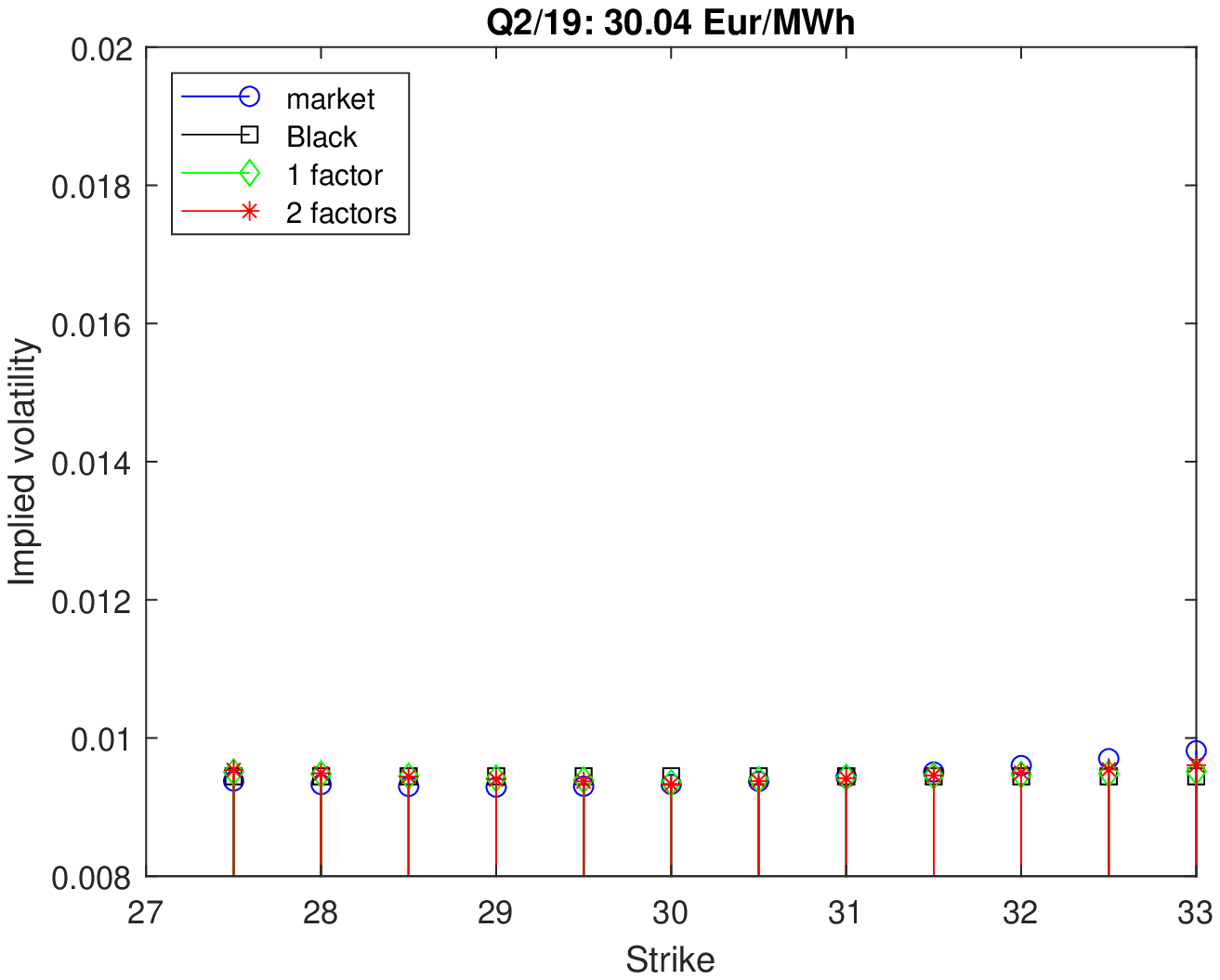}
	\end{subfigure}\hspace*{\fill}
	\begin{subfigure}{0.5\textwidth}
		\includegraphics[width=\linewidth]{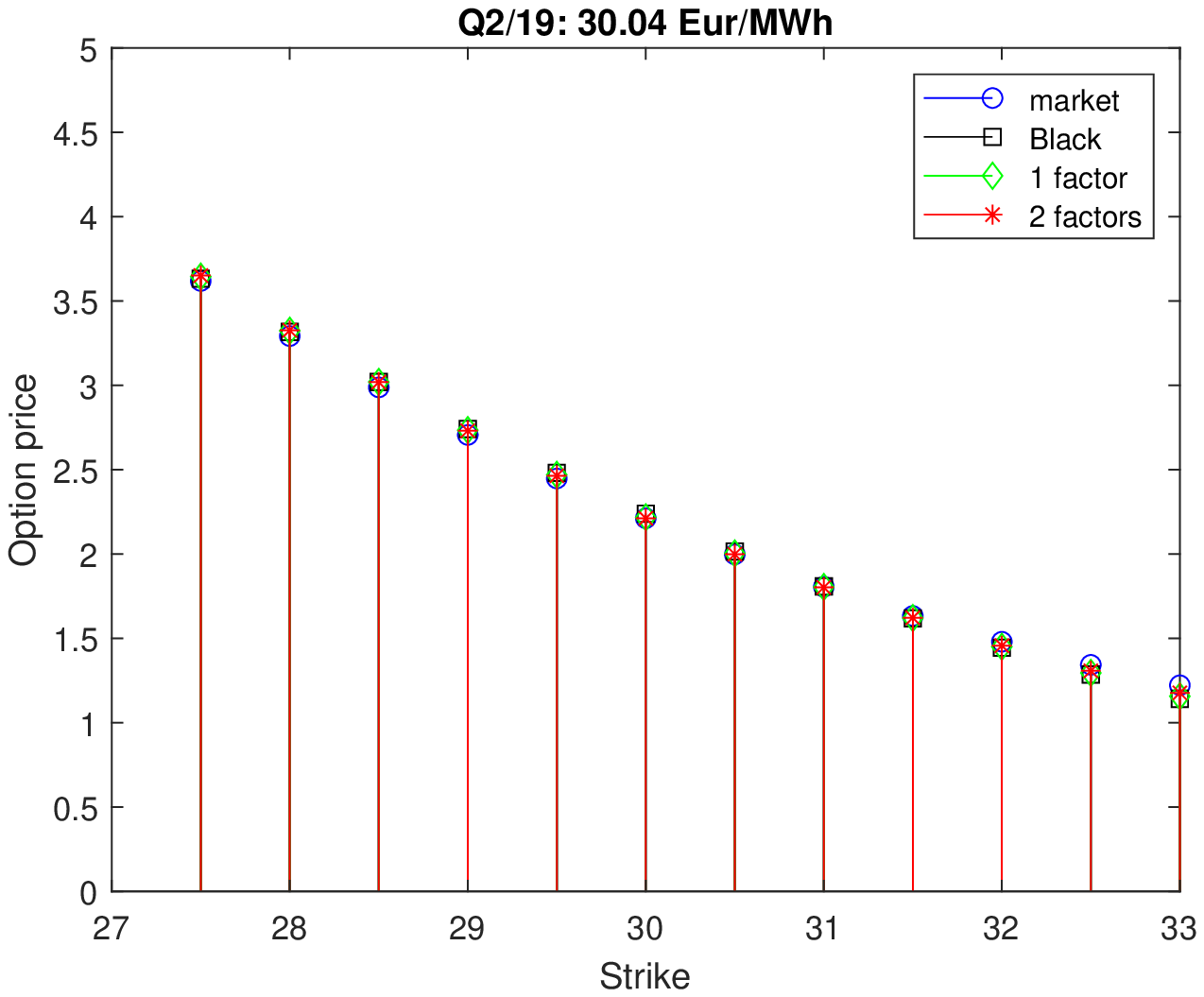}
	\end{subfigure}
	
	\vspace{30pt}
	\medskip

	\begin{subfigure}{0.5\textwidth}
		\includegraphics[width=\linewidth]{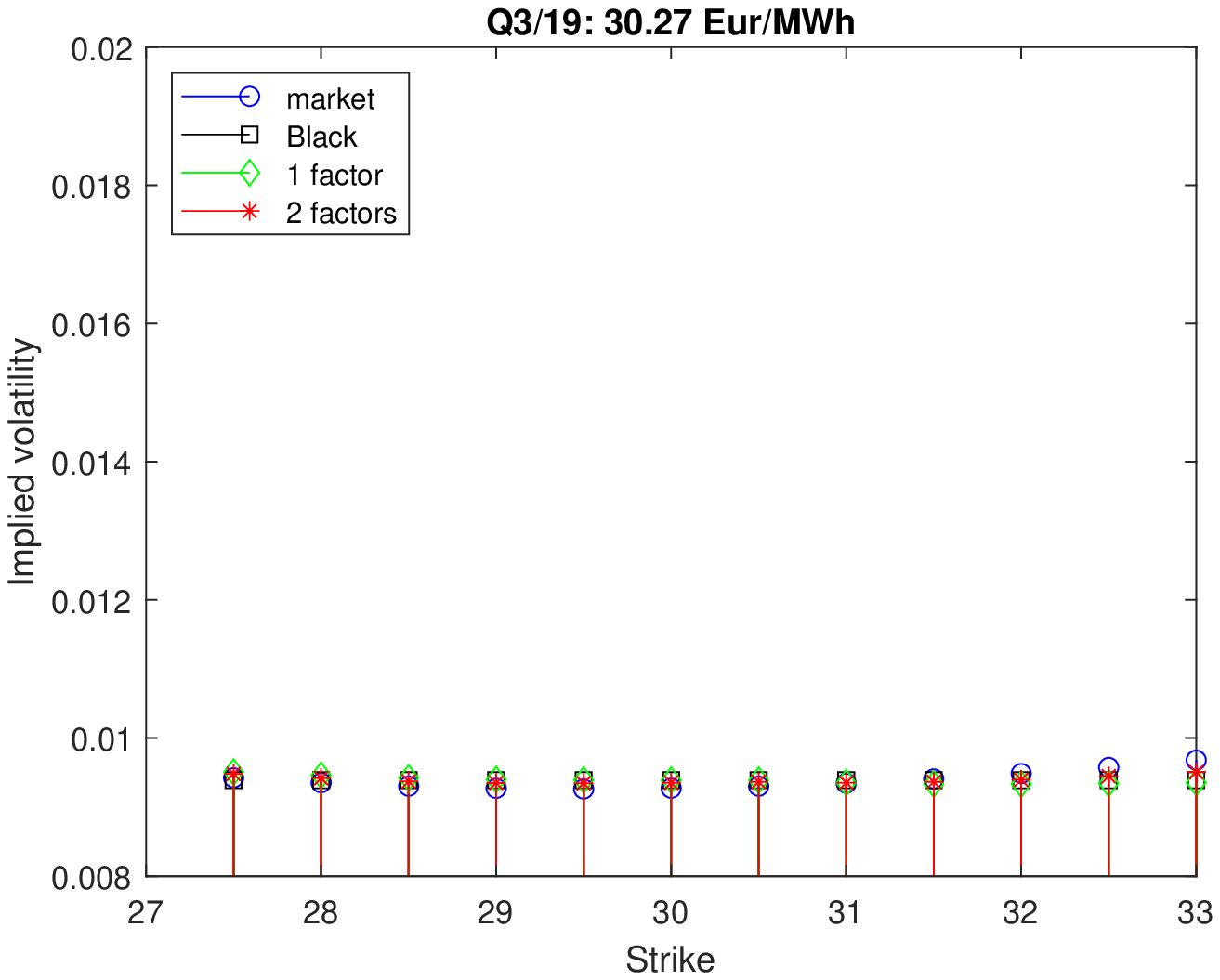}
	\end{subfigure}\hspace*{\fill}
	\begin{subfigure}{0.5\textwidth}
		\includegraphics[width=\linewidth]{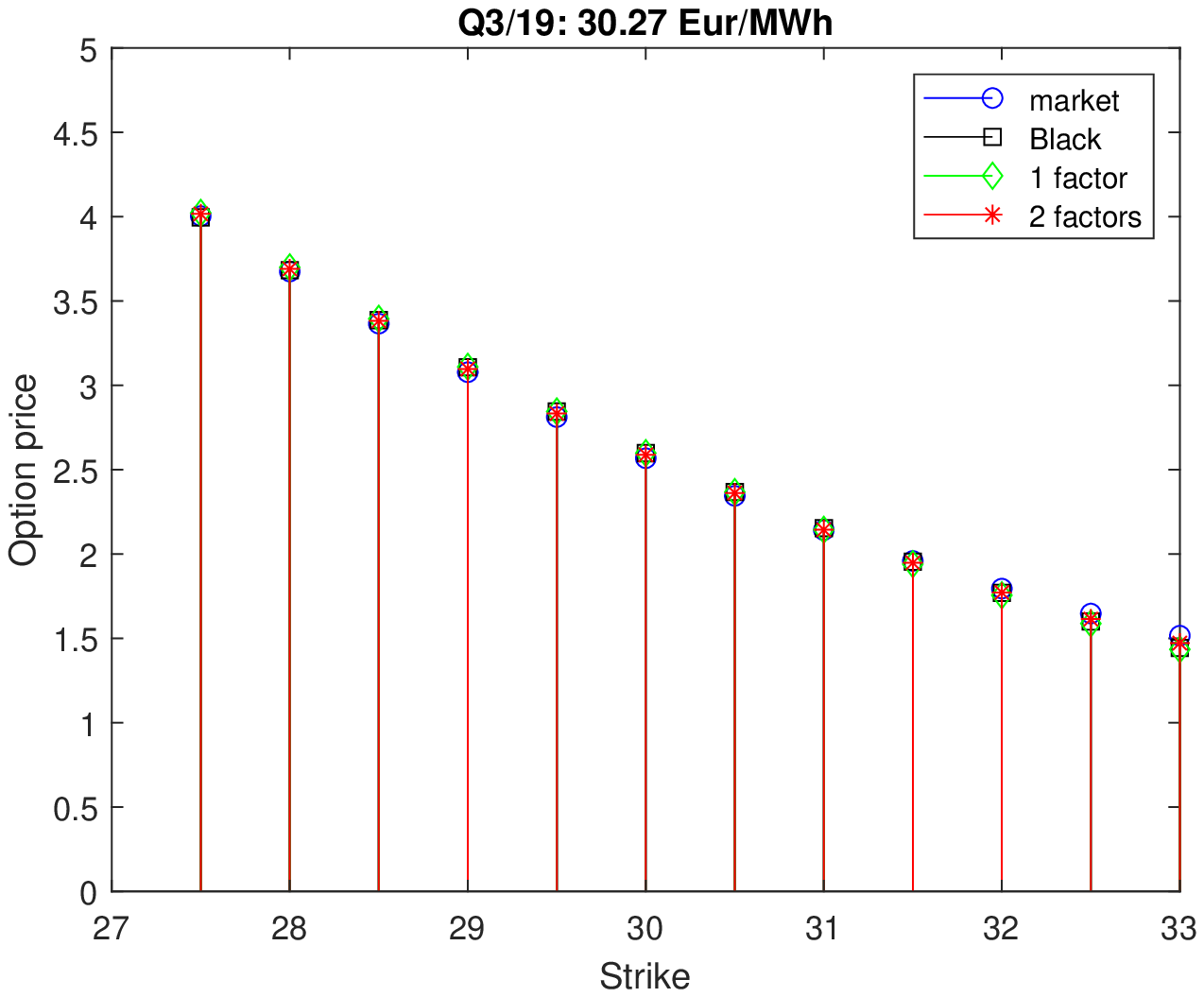}
	\end{subfigure}
	
	
	
	\caption{Implied volatility for the Black, one-factor and two-factor model compared to the empirical implied volatilities of all the options listed at March 5, 2018 (left). The corresponding underlying current price is indicated above each plot. On the right hand side the corresponding prices are shown. Quarterly delivery periods in 2019.} \label{fig:2_ivq2}
\end{figure}

\begin{figure} 
	\begin{subfigure}{0.5\textwidth}
		\includegraphics[width=\linewidth]{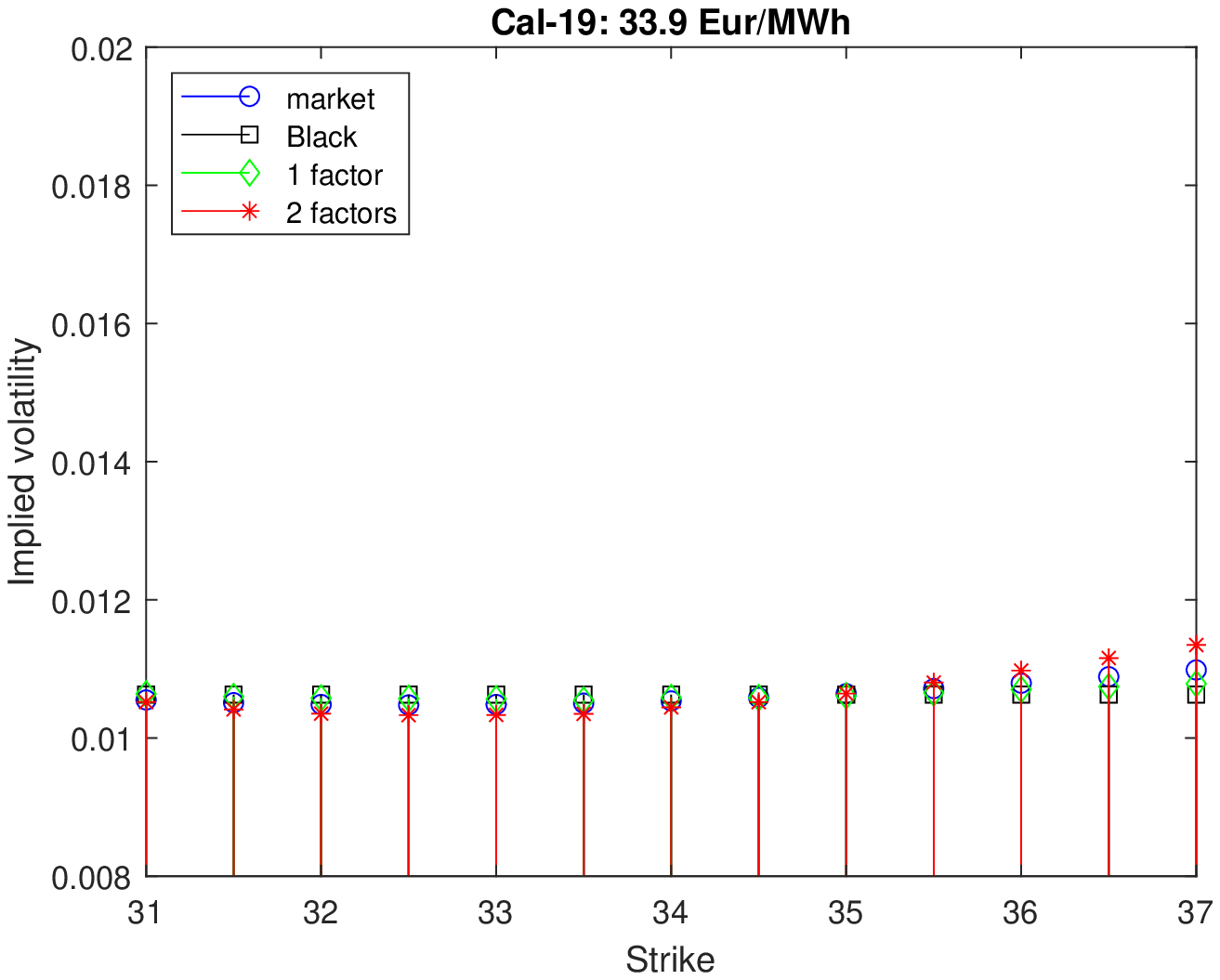}
	\end{subfigure}\hspace*{\fill}
	\begin{subfigure}{0.5\textwidth}
		\includegraphics[width=\linewidth]{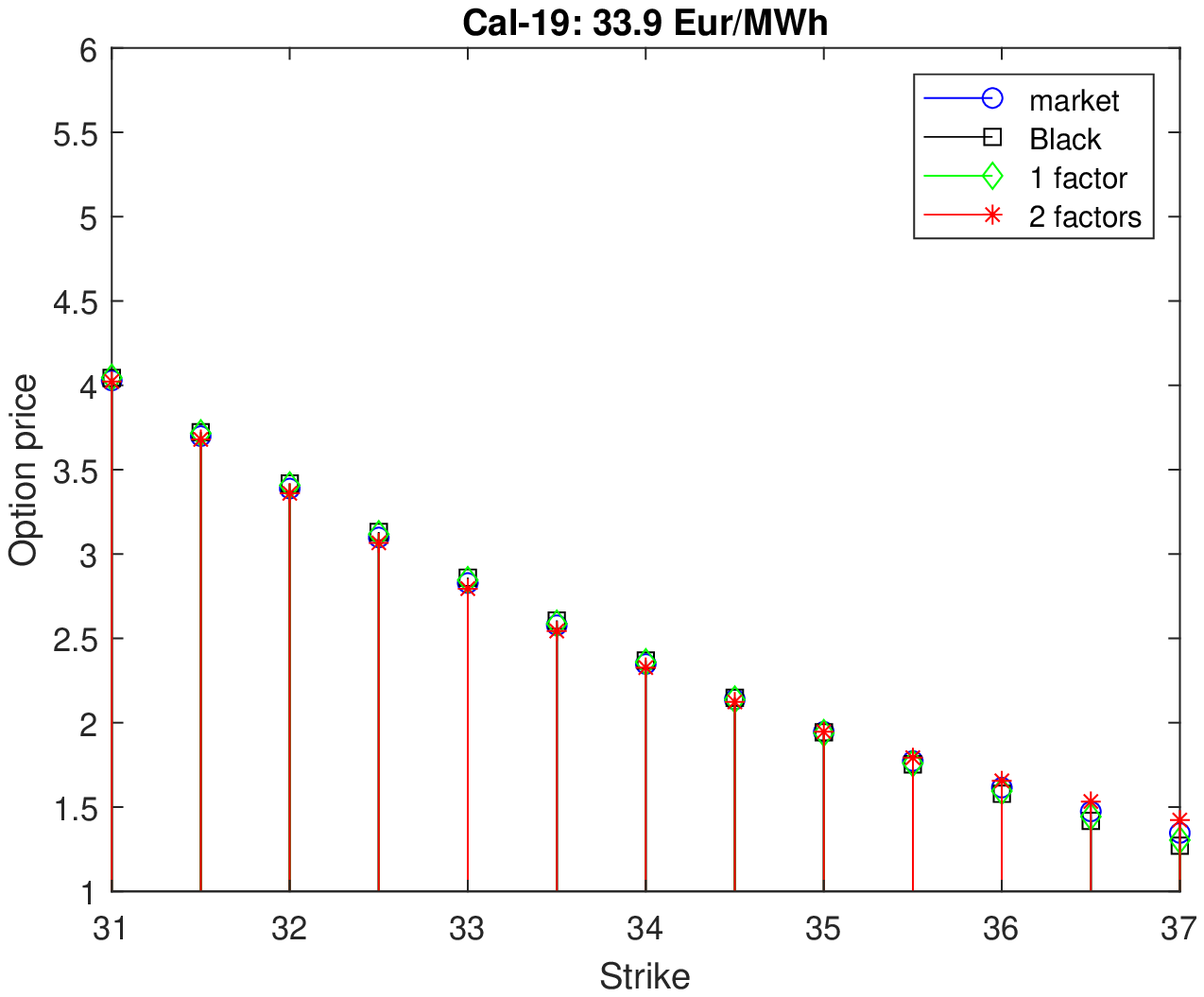}
	\end{subfigure}
	
	\vspace{30pt}
	\medskip
	\begin{subfigure}{0.5\textwidth}
		\includegraphics[width=\linewidth]{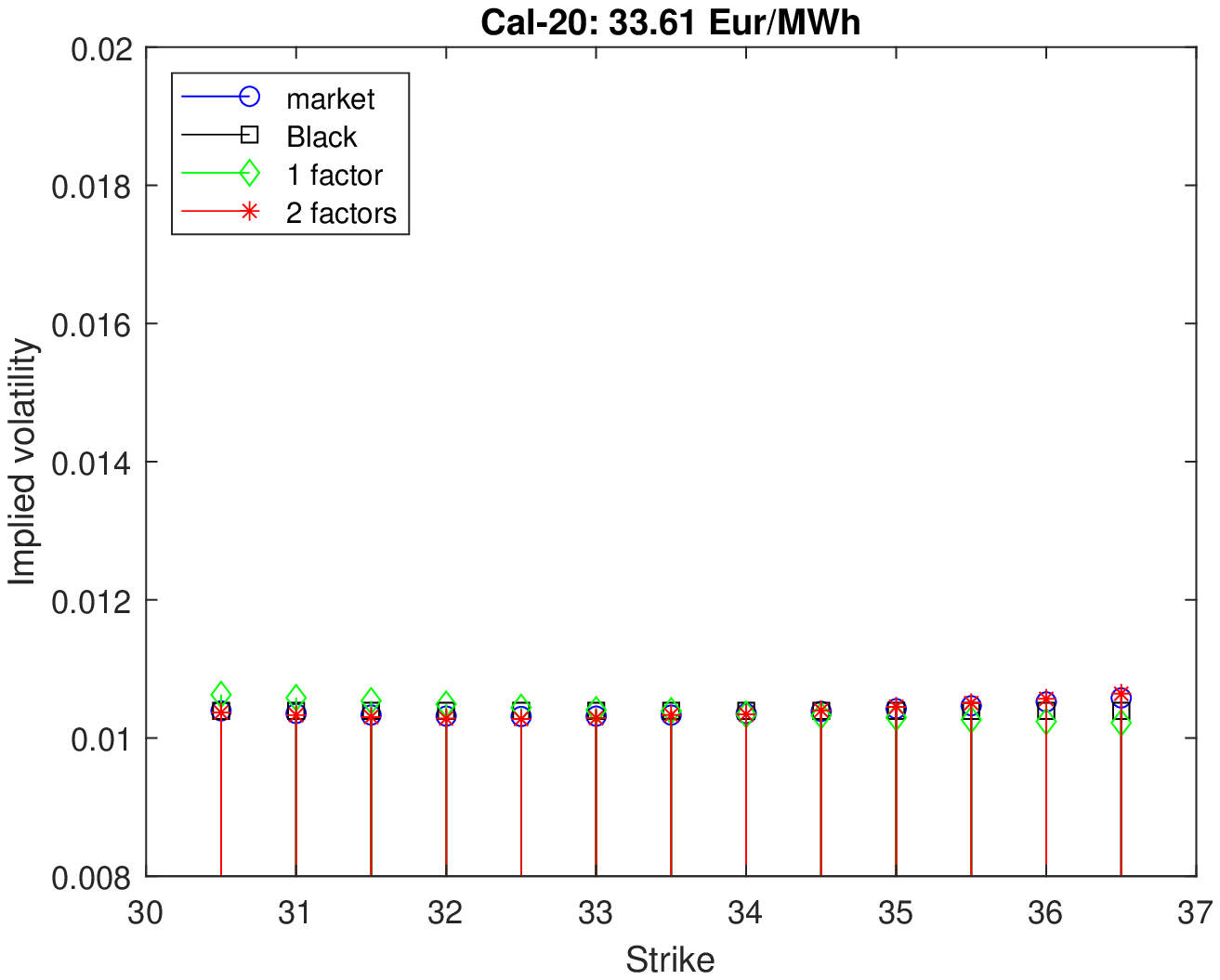}
	\end{subfigure}\hspace*{\fill}
	\begin{subfigure}{0.5\textwidth}
		\includegraphics[width=\linewidth]{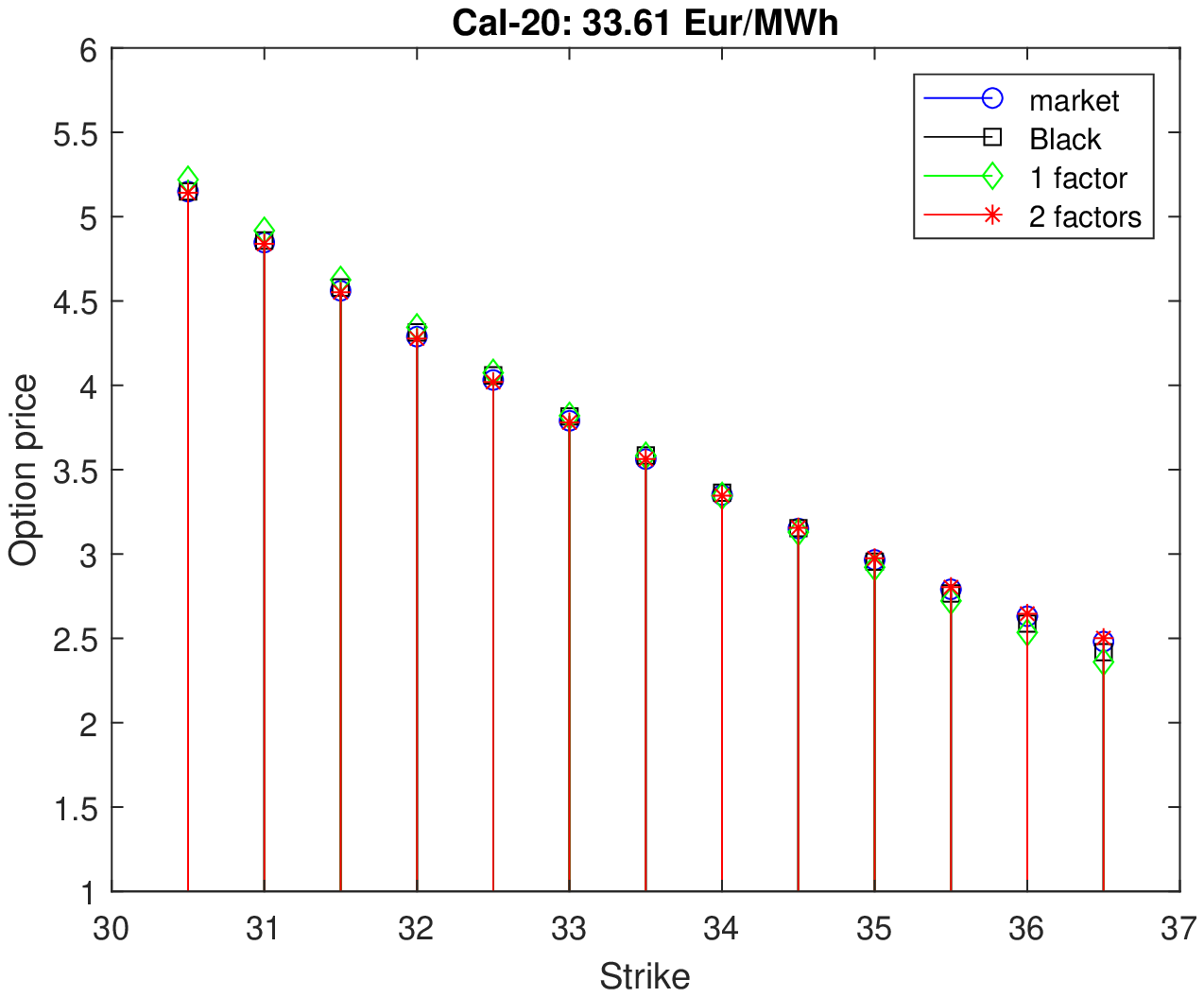}
	\end{subfigure}
	
	\vspace{30pt}
	\medskip
	
	\begin{subfigure}{0.5\textwidth}
		\includegraphics[width=\linewidth]{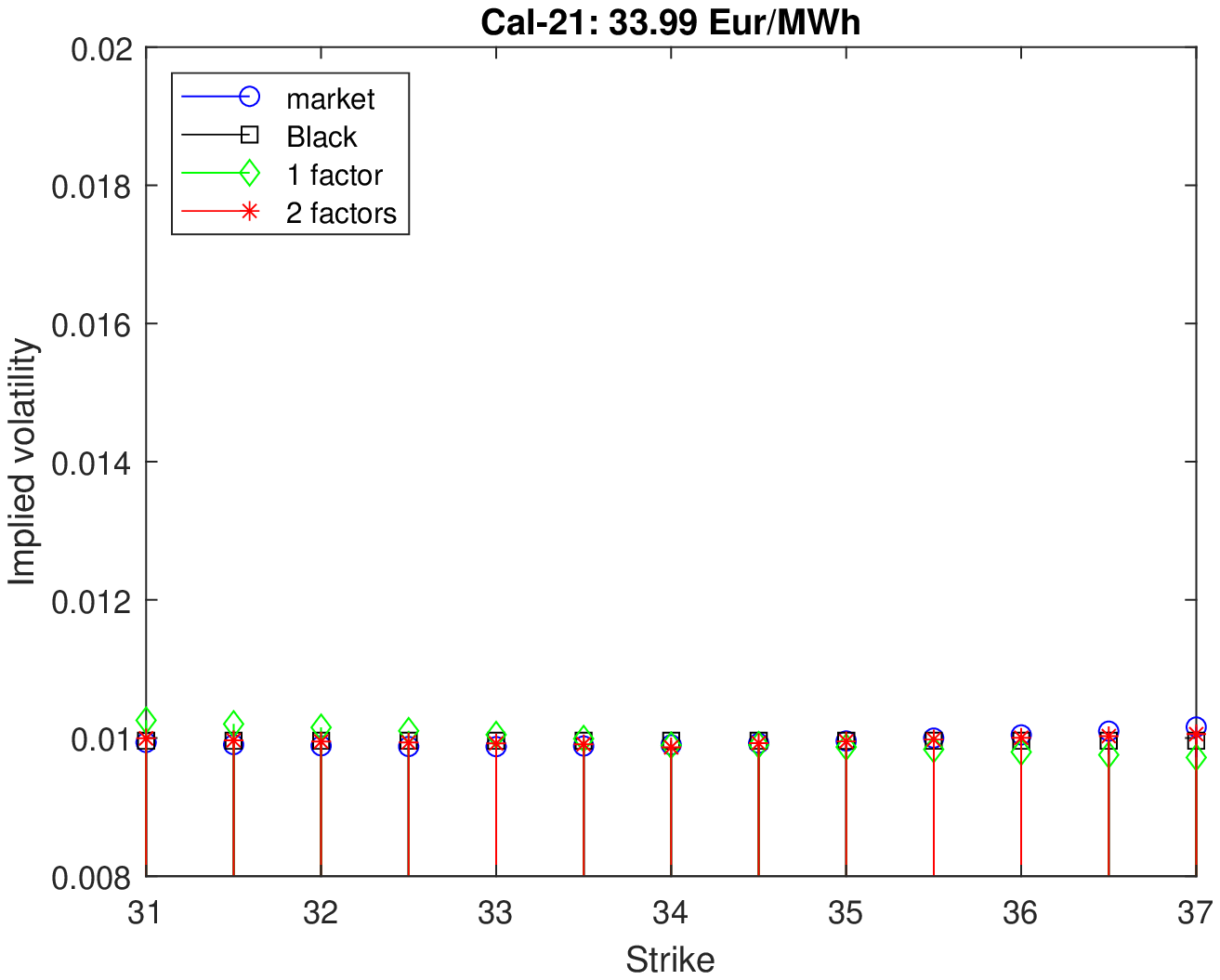}
	\end{subfigure}\hspace*{\fill}
	\begin{subfigure}{0.5\textwidth}
		\includegraphics[width=\linewidth]{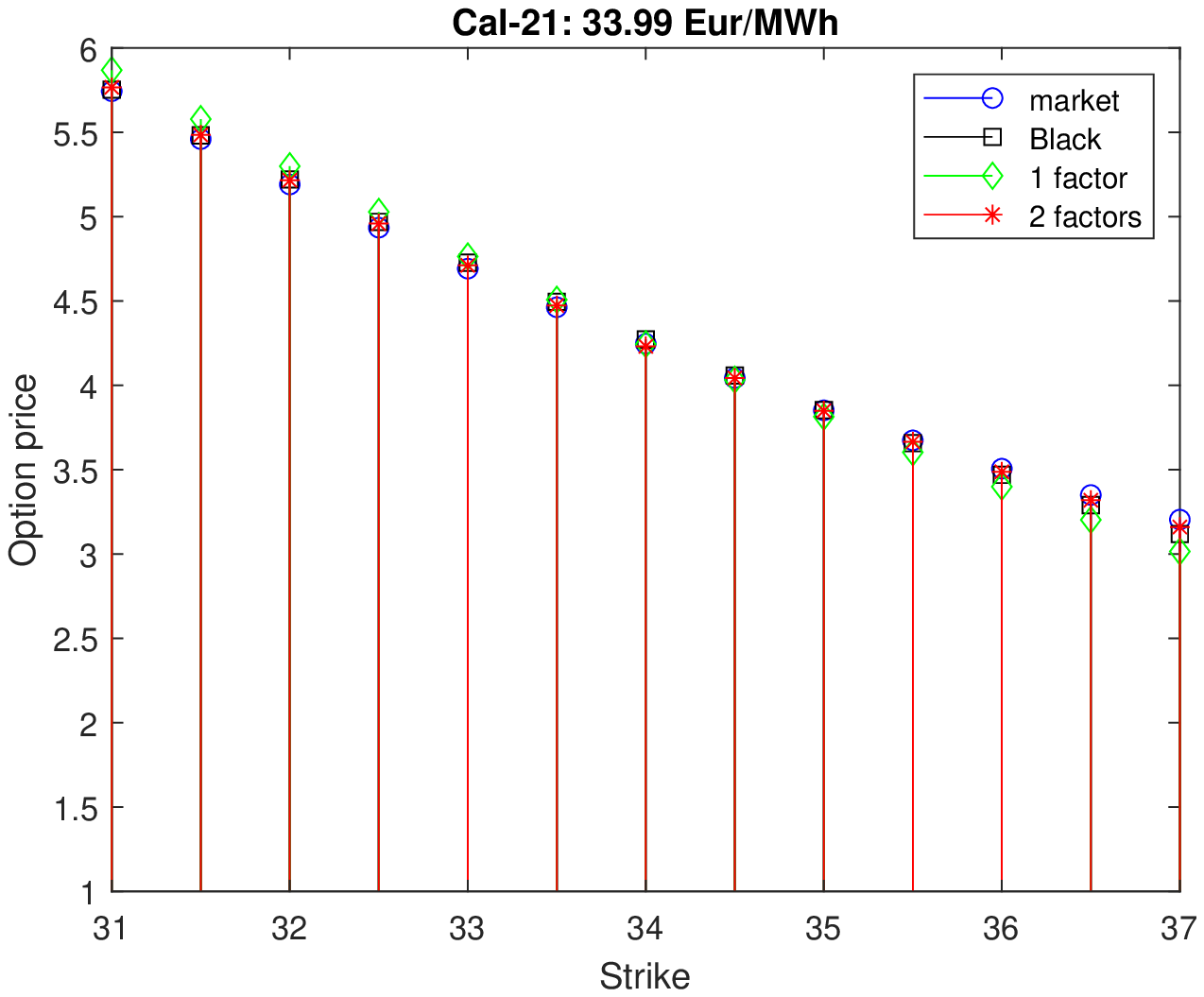}
	\end{subfigure}

	\vspace{30pt}
	
	\caption{Implied volatility for the Black, one-factor and two-factor model compared to the empirical implied volatilities of all the options listed at March 5, 2018 (left). The corresponding underlying current price is indicated above each plot. On the right hand side the corresponding prices are shown. Yearly delivery periods.} \label{fig:3_iv}
\end{figure}

\newpage
\bibliographystyle{plain}

\bibliography{bibbase}

\end{document}